\documentclass{article}
\usepackage{amsfonts}
\usepackage{mathrsfs}
\usepackage{amsthm}
\usepackage{amssymb}
\usepackage{amsmath}
\usepackage{enumerate}
\usepackage{pb-diagram}
\usepackage[all]{xy}
\usepackage{color}

\theoremstyle{plain}
\newtheorem{theorem}{Theorem}[section]
\newtheorem{proposition}[theorem]{Proposition}
\newtheorem{corollary}[theorem]{Corollary}
\newtheorem{lemma}[theorem]{Lemma}

\theoremstyle{definition}
\newtheorem{definition}{Definition}[section]

\theoremstyle{remark}
\newtheorem{remark}{\textbf{Remark}}[section]

\theoremstyle{example}

\numberwithin{equation}{section}

\title{Continuous-Time Quantum Walk on Locally Infinite Graph}
\author{Ce Wang\\
Shanghai Institute for Mathematics and Interdisciplinary Sciences\\
Shanghai 200433, People's Republic of China}

\begin{document}
\maketitle

\noindent\textbf{Abstract.}\ \
Time-reversal symmetry is of fundamental importance to physics.
In the classical theory of time-reversal symmetry, the time-reversal symmetry of a quantum system
is described by an anti-unitary operator, which is known as the time-reversal operator of the system.
In this paper, we introduce and study a model of continuous-time quantum walk on a special locally infinite graph.
After examining its spectral property, we investigate the time-reversal symmetry of the model.
To our surprise, we find that its time-reversal symmetry can be described directly by a unitary operator,
which contrasts sharply with that in the classical theory of time-reversal symmetry. Some other related results are also proven.
\vskip 2mm

\noindent\textbf{Keywords.}\ \  Continuous-time quantum walk; Locally infinite graph; Time-reversal symmetry; Quantum Bernoulli noises.
\vskip 2mm


\section{Introduction}

Time-reversal symmetry is of fundamental importance to physics \cite{geru}. Consequently, it arises naturally in quantum mechanics.
In the classical theory of time-reversal symmetry, the time-reversal symmetry in the evolution of a quantum system
is described by an anti-unitary operator, which is known as the time-reversal operator of the system \cite{geru}.
Unlike unitary operators, which are linear, anti-unitary operators are anti-linear (also known as conjugate linear).

In recent two decades, much attention has been paid to continuous-time quantum walks
(see, e.g. \cite{bose,childs,konno,lin,sin, W-IJQI-2023, zhan} and references therein),
which were originally introduced by Farhi and Gutmann \cite{farhi-gutmann} as quantum counterparts of the classical continuous-time random walks,
and now have wide application in quantum computation, quantum communication as well as in modeling physical processes
(see \cite{venegas} and references therein). From a viewpoint of mathematical physics, a continuous-time quantum walk can be viewed as a small quantum system consisting of a quantum particle (walker), and its evolution still obeys the Schr\"{o}dinger equation.
Hence, it is natural to analyze continuous-time quantum walks from a viewpoint of time-reversal symmetry.

Quantum Bernoulli noises \cite{WCL-JMP-2010} are annihilation and creation operators acting on Bernoulli functionals,
which satisfy the canonical anti-commutation relations (CAR) in equal time.
Due to the variety in their algebraic and analytical structures, quantum Bernoulli noises
have found wide application in many problems in mathematical physics (see, e.g. \cite{WTR-JMP-2019, WY-QIP-2016}).
In this paper, we introduce and study a model of continuous-time quantum walk based on quantum Bernoulli noises,
which can be viewed as a continuous-time quantum walk on a special locally infinite graph.
Our main work is as follows:
\begin{itemize}
  \item We introduce a model of continuous-time quantum walk based on quantum Bernoulli noises, which may describe the motion
    of a free quantum particle on a special locally infinite graph.
  \item We examine the spectral property of the model and obtain several spectral results accordingly.
  \item We investigate the time-reversal symmetry of the model. To our surprise, we find that the time-reversal symmetry of the model
       can be described directly by a unitary operator. This contrasts sharply with that in the classical theory of time-reversal symmetry.
  \item We show that the probability distributions of the model also have some kind of time-reversal symmetry.
\end{itemize}
Our work might suggest that, in some special cases, unitary operators can play the same role as anti-unitary operators
in dealing with time-reversal symmetry in quantum systems.

The paper is organized as follows. In Section~\ref{sec-2}, we prove some auxiliary results
and then define our model of continuous-time quantum walk based on quantum Bernoulli noises.
Section~\ref{sec-3} mainly examines the spectral property of the model and several spectral results are obtained therein.
Finally in Section~\ref{sec-4}, we investigate the time-reversal symmetry of the model.
We first construct a unitary operator and then show that this unitary operator can serve as a time-reversal operator
of the model. As application, we show that the probability distributions of the model also have some kind of time-reversal symmetry.

\textbf{Conventions and frequently used notation.}
Throughout this paper, $\mathbb{R}$ denotes the set of real numbers,
while $\mathbb{N}$ signifies the set of nonnegative integers, namely $\mathbb{N}:=\{0,\, 1,\, 2, \cdots\}$.
By a Hilbert space we mean a separable complex Hilbert space whose inner product is linear
in the second variable. Unless otherwise stated, an operator always means a linear operator.
Given a bounded operator $S$ on a Hilbert space $\mathcal{H}$, $S^*$ stands for its adjoint. $S$ is said to be positive, written $S\geq 0$,
if $\langle u,Su\rangle\geq 0$ for all $u\in \mathcal{H}$, where $\langle\cdot,\cdot\rangle$ denotes the inner product in $\mathcal{H}$.

\section{Model of continuous-time quantum walk}\label{sec-2}

This section introduces a model of continuous-time quantum walk based on quantum Bernoulli noises.
For more details about quantum Bernoulli noises, we refer to \cite{WTR-JMP-2019, W-JMP-2022} and references therein.

Let $\mathfrak{h}$ be the space of square integrable Bernoulli functionals, whose inner product and norm
are written $\langle\cdot,\cdot\rangle$ and $\|\cdot\|$, respectively. As is known, $\mathfrak{h}$ is of infinite dimension
and has an orthonormal basis of the form $\{Z_{\sigma} \mid \sigma\in \Gamma\}$, where $\Gamma$ signifies the finite power set of
$\mathbb{N}$, namely
\begin{equation}\label{eq-1-1}
  \Gamma = \big\{\sigma \mid \sigma\subset \mathbb{N},\, \#(\sigma)<\infty\big\},
\end{equation}
where $\#(\sigma)$ means the cardinality of $\sigma$ as a set. We call $\{Z_{\sigma} \mid \sigma\in \Gamma\}$
the canonical ONB for $\mathfrak{h}$. For convenience, we use $\sigma\setminus k$ to mean $\sigma\setminus \{k\}$
when $k\in \mathbb{N}$ and $\sigma\in \Gamma$. Similarly, we use $\sigma\cup k$ and $\sigma\triangle k$.

According to \cite{WCL-JMP-2010}, for each $k\in \mathbb{N}$, there exists a bounded operator $\partial_k$ on $\mathfrak{h}$
such that
\begin{equation}\label{eq-annihilation-creation-operator}
  \partial_kZ_{\sigma} = \mathbf{1}_{\sigma}(k) Z_{\sigma\setminus k},\quad
  \partial_k^*Z_{\sigma} = (1-\mathbf{1}_{\sigma}(k)) Z_{\sigma\cup k},\quad \sigma\in \Gamma,
\end{equation}
where $\mathbf{1}_{\sigma}(k)$ denotes the indicator of $\sigma$ as a subset of $\mathbb{N}$.
The family $\{\partial_k, \partial_k^* \mid k\in \mathbb{N}\}$ is then known as quantum Bernoulli noises (QBN),
and it satisfies the following commutation relations:
\begin{equation*}
  \partial_j\partial_k = \partial_k\partial_j,\quad \partial_j^*\partial_k^*=\partial_k^*\partial_j^*,\quad
  \partial_j\partial_k^* = \partial_k^*\partial_j\quad (j,\, k\in \mathbb{N},\, j\ne k)
\end{equation*}
and
\begin{equation}\label{eq}
 \partial_k\partial_k=\partial_k^*\partial_k^*=0,\quad \partial_k\partial_k^*+ \partial_k^*\partial_k=I\quad (k\in \mathbb{N}),
\end{equation}
where $I$ signifies the identity operator on $\mathfrak{h}$. In particular, it satisfies the canonical anti-commutation relations (CAR)
in equal time.

In what follows, we write $\Xi_k = \partial_k^* + \partial_k$ for $k\in \mathbb{N}$. It then follows from the above commutation relations that
each $\Xi_k$ is a self-adjoint unitary operator and the family $\{\Xi_k \mid k\in \mathbb{N}\}$ is commuting in the sense that
$\Xi_j\Xi_k=\Xi_k\Xi_j$ for all $j$, $k\in \mathbb{N}$. Additionally, from (\ref{eq-annihilation-creation-operator}),
one has
\begin{equation}\label{eq-shift-property}
  \Xi_kZ_{\sigma} = Z_{\sigma \triangle k},\quad k\in \mathbb{N},\, \sigma\in \Gamma,
\end{equation}
where $Z_{\sigma}$ and $Z_{\sigma \triangle k}$ are the basis vectors of the canonical ONB for $\mathfrak{h}$, respectively.

\begin{definition}
A function $w\colon \mathbb{N} \rightarrow(0,\infty)$ is called a weight on $\mathbb{N}$
if $|w|:=\sum_{k=0}^{\infty}w(k)<\infty$. For a weight $w$ on $\mathbb{N}$, we define
\begin{equation}\label{eq-adjacency-operator}
  A_w=\sum_{k=0}^{\infty}w(k)\Xi_k
\end{equation}
and call it the $w$-adjacency operator.
\end{definition}

Note that the operator series in (\ref{eq-adjacency-operator}) is convergent in operator norm,
hence the $w$-adjacency operator $A_w$ is a bounded operator on $\mathfrak{h}$.

\begin{theorem}\label{thr-basic-2-1}
Let $w$ be a weight on $\mathbb{N}$. Then $A_w$ is self-adjoint and moreover $\|A_w\| = |w|$.
\end{theorem}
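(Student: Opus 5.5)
Self-adjointness is immediate from the operator-norm convergence of the series defining $A_w$. Each partial sum $S_n=\sum_{k=0}^{n}w(k)\Xi_k$ is self-adjoint, because each $\Xi_k$ is self-adjoint and $w(k)$ is real. Since $S_n\to A_w$ in operator norm and taking adjoints is an isometry, we get $S_n^*\to A_w^*$. Hence $A_w^*=A_w$. The same convergence gives the upper bound:
\[
\|A_w\|\le\sum_{k}w(k)\|\Xi_k\|=|w|,
\]
since each $\Xi_k$ is unitary, so $\|\Xi_k\|=1$.

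For the reverse inequality $\|A_w\|\ge |w|$, I will exhibit unit vectors $u_n$ with $\langle u_n,A_wu_n\rangle\to|w|$. Since $A_w$ is self-adjoint, this suffices, because $\|A_w\|=\sup_{\|u\|=1}|\langle u,A_wu\rangle|$. The natural candidates are approximate joint eigenvectors of the commuting self-adjoint unitaries $\Xi_0,\dots,\Xi_n$ with eigenvalue $+1$. Using the shift property (\ref{eq-shift-property}), I define
\[
u_n=2^{-(n+1)/2}\sum_{\sigma\subset\{0,\dots,n\}}Z_\sigma .
\]
This is a unit vector, as a normalized sum of $2^{n+1}$ orthonormal basis vectors. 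For $k\le n$, the map $\sigma\mapsto\sigma\triangle k$ is a bijection of the power set of $\{0,\dots,n\}$, so $\Xi_ku_n=u_n$.

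For $k>n$, the vector $\Xi_ku_n$ is a normalized sum of the $Z_{\sigma\cup k}$. These basis vectors are orthogonal to every vector appearing in $u_n$, so $\langle u_n,\Xi_ku_n\rangle=0$. Passing the inner product through the norm-convergent series, I obtain
\[
\langle u_n,A_wu_n\rangle=\sum_{k=0}^{n}w(k)\longrightarrow |w|.
\]
Therefore $\|A_w\|\ge|w|$, and combined with the upper bound, $\|A_w\|=|w|$.

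The only delicate step is the lower bound, specifically choosing the test vectors so that $\Xi_k$ fixes them exactly for $k\le n$; the symmetric-difference bijection handles this. The tail terms need no estimate, since they contribute exactly zero, although even without that they would be bounded by $\sum_{k>n}w(k)\to0$.
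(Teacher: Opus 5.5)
Your proof is correct and follows the paper's approach: the same uniform test vector over $\Gamma_n$, fixed by each $\Xi_k$ with $k\le n$ because $\tau\mapsto\tau\triangle k$ is a bijection of $\Gamma_n$. The only difference is minor. The paper applies the truncated sum $\sum_{k\le n}w(k)\Xi_k$ to $\xi_n$ and lets $n\to\infty$ using norm convergence of the partial sums, whereas you take the quadratic form of the full $A_w$ and note that the tail terms vanish exactly; for that step you only need $|\langle u,A_wu\rangle|\le\|A_w\|$, not self-adjointness.
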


\begin{proof}
It is easy to see that $A_w$ is self-adjoint. Next let us show that $\|A_w\| = |w|$.
Let $n\geq 1$ be a positive integer. Consider the vector $\xi_n\in \mathfrak{h}$ given by
\begin{equation}\label{eq}
  \xi_n = \sum_{\tau\in \Gamma_n}Z_{\tau},
\end{equation}
where $\Gamma_n:=\{\tau \mid \tau\subset \mathbb{N}_n\}$ with $\mathbb{N}_n:=\{0, 1, \cdots, n\}$.
For each $k\in \mathbb{N}_n$, since the mapping $\tau\mapsto \tau\triangle k$ is a bijection from $\Gamma_n$ to itself,
using (\ref{eq-shift-property}) we have
\begin{equation*}
  \Xi_k\xi_n = \sum_{\tau\in \Gamma_n}\Xi_kZ_{\tau}
             = \sum_{\tau\in \Gamma_n}Z_{\tau \triangle k}
             = \sum_{\tau\in \Gamma_n}Z_{\tau}
             = \xi_n,
\end{equation*}
which implies that
\begin{equation*}
  \Big(\sum_{k=0}^nw(k)\Xi_k\Big)\xi_n = \Big(\sum_{k=0}^nw(k)\Big)\xi_n.
\end{equation*}
Thus, we further have
\begin{equation*}
  \Big(\sum_{k=0}^nw(k)\Big)\|\xi_n\|
  =\Big\|\Big(\sum_{k=0}^nw(k)\Big)\xi_n\Big\|
  = \Big\|\Big(\sum_{k=0}^nw(k)\Xi_k\Big)\xi_n\Big\|
  \leq \Big\|\sum_{k=0}^nw(k)\Xi_k\Big\|\|\xi_n\|,
\end{equation*}
which, together with $\|\xi_n\|> 0$, gives
\begin{equation*}
 \sum_{k=0}^nw(k) \leq \Big\|\sum_{k=0}^nw(k)\Xi_k\Big\|.
\end{equation*}
It then follows from taking the limit (as $n\to \infty$) that $|w| \leq \|A_w\|$. Obviously, $\|A_w\|\leq |w|$.
Therefore, we finally have $\|A_w\|= |w|$.
\end{proof}

As shown above, given a weight $w$ on $\mathbb{N}$, one has a self-adjoint bounded operator $A_w$ on $\mathfrak{h}$.
To help get an insight into its physical meaning, let us give a graph-theoretic interpretation to $A_w$ bellow.

Consider the set $\Gamma$, which is defined in (\ref{eq-1-1}).
We introduce an adjacency relation in $\Gamma$ as follows: two elements $\sigma$, $\tau\in \Gamma$ are said to be adjacent if $\#(\sigma\triangle\tau)=1$, where $\sigma\triangle\tau$ signifies the symmetric difference of $\sigma$ and $\tau$ as subsets of $\mathbb{N}$.
In that case, we write $\sigma\sim \tau$. With this adjacency relation, $\Gamma$ forms an infinite graph,
which we denote by $(\Gamma,\sim)$.

\begin{lemma}\cite{WTR-JMP-2019}\label{lem-2-2}
Let $n$ be a positive integer and $\Gamma_n=\{\sigma \mid \sigma\subset \mathbb{N}_n\}$ with $N_n=\{0, 1,\cdots, n\}$.
Then $\Gamma_n\subset \Gamma$. Moreover, as a subgraph of the graph $(\Gamma,\sim)$, $(\Gamma_n,\sim)$ is isomorphic to
the $n+1$-dimensional hypercube.
\end{lemma}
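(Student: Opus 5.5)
The inclusion $\Gamma_n\subset\Gamma$ is immediate. Every $\sigma\subset\mathbb{N}_n$ is a subset of $\mathbb{N}$, and $\#(\sigma)\le n+1<\infty$, so $\sigma$ satisfies the defining condition in (\ref{eq-1-1}). For the isomorphism we take the standard model of the $(n+1)$-dimensional hypercube $Q_{n+1}$. Its vertex set is $\{0,1\}^{n+1}$, indexed by coordinates $0,1,\dots,n$. Two vertices are adjacent exactly when they differ in one coordinate, that is, when their Hamming distance is $1$.

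We use the indicator map $\phi\colon\Gamma_n\to\{0,1\}^{n+1}$ given by
\[
\phi(\sigma)=\big(\mathbf{1}_{\sigma}(0),\mathbf{1}_{\sigma}(1),\dots,\mathbf{1}_{\sigma}(n)\big).
\]
This map is a bijection: its inverse sends $x=(x_0,\dots,x_n)$ to the set $\{k\in\mathbb{N}_n \mid x_k=1\}$. Since a subset of $\mathbb{N}_n$ is determined by its indicator, the two compositions are identities.

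It remains to show that $\phi$ preserves and reflects adjacency. The key identity is that for $\sigma,\tau\in\Gamma_n$ we have $\sigma\triangle\tau\subset\mathbb{N}_n$ and
\[
\sigma\triangle\tau=\{k\in\mathbb{N}_n \mid \mathbf{1}_{\sigma}(k)\neq \mathbf{1}_{\tau}(k)\}.
\]
Hence $\#(\sigma\triangle\tau)$ equals the Hamming distance between $\phi(\sigma)$ and $\phi(\tau)$. Consequently $\sigma\sim\tau$, which means $\#(\sigma\triangle\tau)=1$, holds if and only if $\phi(\sigma)$ and $\phi(\tau)$ are adjacent in $Q_{n+1}$.

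The subgraph $(\Gamma_n,\sim)$ is the induced subgraph: it carries the adjacency relation of $(\Gamma,\sim)$ restricted to $\Gamma_n$. So this equivalence is exactly what is needed for $\phi$ to be a graph isomorphism.

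No step is a real obstacle. The only point needing care is the convention: the subgraph must be the induced one, and the hypercube must be realized as $\{0,1\}^{n+1}$ with Hamming-distance-one adjacency. Once both are fixed, the argument above is complete.
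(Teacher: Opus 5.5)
Your proof is correct and complete. The indicator map $\phi$ is a bijection $\Gamma_n\to\{0,1\}^{n+1}$, and the identity $\#(\sigma\triangle\tau)=\#\{k\in\mathbb{N}_n \mid \mathbf{1}_{\sigma}(k)\neq\mathbf{1}_{\tau}(k)\}$ shows that $\phi$ preserves and reflects adjacency on the induced subgraph. The paper gives no proof of its own; it quotes the lemma from \cite{WTR-JMP-2019}, so there is nothing to compare against, and your argument is the standard verification one would expect there.
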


It is easy to see that $\Gamma_1\subset \Gamma_2\subset \cdots \subset\Gamma_n \subset \Gamma_{n+1}\subset \cdots \subset\Gamma$.
Additionally, one can also show that
\begin{equation}
  \Gamma = \bigcup_{n=1}^{\infty}\Gamma_n.
\end{equation}
This, together with Lemma~\ref{lem-2-2}, suggests that the graph $(\Gamma,\sim)$ may be called a hypercube of infinite-dimension.

\begin{proposition}
Let $\sigma$, $\tau\in \Gamma$ be vertices of the graph $(\Gamma,\sim)$. Then, $\sigma\sim \tau$ if and only if there exists
a unique $k\in \mathbb{N}$ such that $\Xi_kZ_{\sigma} = Z_{\tau}$.
\end{proposition}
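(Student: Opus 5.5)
The plan is to reduce everything to the shift property (\ref{eq-shift-property}), $\Xi_kZ_{\sigma}=Z_{\sigma\triangle k}$, together with the fact that $\{Z_{\sigma}\mid\sigma\in\Gamma\}$ is an orthonormal basis. Distinct indices therefore give distinct basis vectors, and so $Z_{\sigma\triangle k}=Z_{\tau}$ holds if and only if $\sigma\triangle k=\tau$ as subsets of $\mathbb{N}$. The whole statement then becomes a set-theoretic fact about symmetric differences, and I would prove it in that form: $\#(\sigma\triangle\tau)=1$ if and only if there is a unique $k\in\mathbb{N}$ with $\sigma\triangle\{k\}=\tau$.

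For the forward direction, assume $\sigma\sim\tau$, so that $\sigma\triangle\tau=\{k\}$ for some $k\in\mathbb{N}$. Since symmetric difference is associative and every set is its own inverse under it, we get $\sigma\triangle\{k\}=\sigma\triangle(\sigma\triangle\tau)=\tau$. By (\ref{eq-shift-property}) this gives $\Xi_kZ_{\sigma}=Z_{\tau}$, so such a $k$ exists.

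For uniqueness, suppose $j\in\mathbb{N}$ also satisfies $\Xi_jZ_{\sigma}=Z_{\tau}$. By the injectivity of the basis indexing noted above, $\sigma\triangle\{j\}=\tau$. Cancelling $\sigma$ gives $\{j\}=\sigma\triangle\tau=\{k\}$, and hence $j=k$.

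For the converse, suppose some $k$ satisfies $\Xi_kZ_{\sigma}=Z_{\tau}$. Then $\tau=\sigma\triangle\{k\}$, so $\sigma\triangle\tau=\{k\}$ and $\#(\sigma\triangle\tau)=1$, which means $\sigma\sim\tau$. The converse uses only existence, so the word ``unique'' is harmless in that direction.

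There is no real obstacle here. The only point that needs explicit justification is that $Z_{\sigma}=Z_{\tau}$ forces $\sigma=\tau$. This holds because for $\sigma\neq\tau$ the basis vectors are orthonormal, so $\langle Z_{\sigma},Z_{\tau}\rangle=0\neq1$.
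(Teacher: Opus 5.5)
Your proof is correct and takes essentially the same route as the paper. Both arguments reduce the claim via the shift property $\Xi_kZ_{\sigma}=Z_{\sigma\triangle k}$ to the set identity $\tau=\sigma\triangle k \iff \sigma\triangle\tau=\{k\}$. You are slightly more explicit than the paper in two places: you justify that $Z_{\sigma}=Z_{\tau}$ forces $\sigma=\tau$ by orthonormality, and you prove that no other $j$ satisfies $\Xi_jZ_{\sigma}=Z_{\tau}$, whereas the paper leaves both implicit.
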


\begin{proof}
The ``only if'' part. Suppose that $\sigma\sim \tau$. Then, by definition, $\#(\sigma\triangle \tau) =1$, which implies that
there exists a unique $k\in \mathbb{N}$ such that $\sigma\triangle \tau=\{k\}$, which is equivalent to
$\tau=\sigma\triangle k$. Hence, by (\ref{eq-shift-property}), we come to $\Xi_kZ_{\sigma}=Z_{\sigma\triangle k}=Z_{\tau}$.

The ``if'' part. Suppose that $\Xi_kZ_{\sigma} = Z_{\tau}$. Then, again using (\ref{eq-shift-property}), we find
$Z_{\sigma\triangle k} = Z_{\tau}$, which implies $\sigma\triangle k= \tau$, which is equivalent to $\sigma\triangle \tau= \{k\}$.
Thus, $\#(\sigma\triangle\tau)=1$, namely $\sigma\sim \tau$.
\end{proof}

\begin{corollary}\label{coro-neighbour}
Let $\sigma\in \Gamma$ be a vertex of the graph $(\Gamma,\sim)$ and write $\mathcal{N}(\sigma)=\{\tau\in \Gamma \mid \tau\sim \sigma\}$. Then
$\mathcal{N}(\sigma)$ has a representation of the following form
\begin{equation}\label{eq}
  \mathcal{N}(\sigma) = \{\,\sigma\triangle k \mid k\in \mathbb{N}\,\}.
\end{equation}
In particular, the graph $(\Gamma,\sim)$ is locally infinite.
\end{corollary}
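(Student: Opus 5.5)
The plan is to deduce both claims directly from the preceding Proposition, together with the definition of the adjacency relation $\sim$ and the shift property (\ref{eq-shift-property}).

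\emph{The representation of $\mathcal{N}(\sigma)$.} I would prove the two inclusions separately.

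First, let $\tau\in\mathcal{N}(\sigma)$. Since $\sim$ is symmetric (because $\sigma\triangle\tau=\tau\triangle\sigma$), we have $\sigma\sim\tau$. The Proposition then gives some $k\in\mathbb{N}$ with $\Xi_kZ_\sigma=Z_\tau$. By (\ref{eq-shift-property}) this reads $Z_{\sigma\triangle k}=Z_\tau$. Since the $Z_\sigma$, $\sigma\in\Gamma$, are distinct members of an ONB and hence indexed injectively, it follows that $\tau=\sigma\triangle k$.

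Conversely, fix $k\in\mathbb{N}$ and put $\tau=\sigma\triangle k$. Then $\tau$ is finite, so $\tau\in\Gamma$. Moreover $\sigma\triangle\tau=\{k\}$, which has cardinality one, so $\tau\sim\sigma$. One could equally invoke the ``if'' part of the Proposition, since $\Xi_kZ_\sigma=Z_\tau$.

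\emph{Local infiniteness.} I would show that the map $\mathbb{N}\ni k\mapsto\sigma\triangle k\in\mathcal{N}(\sigma)$ is injective. Suppose $\sigma\triangle j=\sigma\triangle k$. Taking the symmetric difference of both sides with $\sigma$ gives $\{j\}=\{k\}$, so $j=k$. Hence $\mathcal{N}(\sigma)$ is in bijection with $\mathbb{N}$ and is infinite. Since $\sigma$ was arbitrary, every vertex has infinite degree, which is exactly what it means for $(\Gamma,\sim)$ to be locally infinite.

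There is no genuine obstacle. The only points needing care are the symmetry of $\sim$ and the injectivity of $\sigma\mapsto Z_\sigma$.
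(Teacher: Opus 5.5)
Your proposal is correct and follows the route the paper intends: the paper states this corollary without a separate proof, as an immediate consequence of the preceding Proposition and the shift property $\Xi_kZ_{\sigma}=Z_{\sigma\triangle k}$, which are exactly what you use. Your injectivity argument for $k\mapsto\sigma\triangle k$ also supplies the ``locally infinite'' conclusion, which the paper leaves implicit.
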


Let $l^2(\Gamma)$ be the space of square summable complex-valued functions defined on $\Gamma$.
Then, as a Hilbert space, $l^2(\Gamma)$ has an orthonormal basis of the form $\{\psi_{\sigma} \mid \sigma\in \Gamma\}$,
where $\psi_{\sigma}$ is the function on $\Gamma$ given by
\begin{equation}\label{eq}
  \psi_{\sigma}(\gamma) =
  \left\{
    \begin{array}{ll}
      1, & \hbox{$\gamma=\sigma$;} \\
      0, & \hbox{$\gamma\in \Gamma$, $\gamma\ne \sigma$.}
    \end{array}
  \right.
\end{equation}
Clearly, there exists a unitary isomorphism $\mathsf{F}\colon l^2(\Gamma)\rightarrow \mathfrak{h}$ such that
\begin{equation}\label{eq}
  \mathsf{F}\psi_{\sigma} = Z_{\sigma},\quad \sigma\in \Gamma.
\end{equation}

\begin{theorem}\label{thr-adjacency-represent}
Let $w$ be a weight on $\mathbb{N}$. Then, as a self-adjoint bounded operator on space $l^2(\Gamma)$,
$\widetilde{A}_w:=\mathsf{F}^{-1}A_w\mathsf{F}$ has a representation of the form
\begin{equation}\label{eq-adjacency-represent}
  \widetilde{A}_wf(\sigma) = \sum_{k=0}^{\infty}w(k)f(\sigma\triangle k),\quad \sigma\in \Gamma,
\end{equation}
where $f\in l^2(\Gamma)$.
\end{theorem}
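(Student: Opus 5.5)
The plan is to conjugate each $\Xi_k$ by $\mathsf{F}$ separately, identify the conjugate on $l^2(\Gamma)$ as a composition operator, and then pass to the norm-convergent series.

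\textbf{Step 1: the single operator $\widetilde{\Xi}_k:=\mathsf{F}^{-1}\Xi_k\mathsf{F}$.} Define $(T_kf)(\sigma)=f(\sigma\triangle k)$ for $f\in l^2(\Gamma)$. The map $\sigma\mapsto\sigma\triangle k$ is an involutive bijection of $\Gamma$. Hence $T_k$ is a well-defined unitary (indeed self-adjoint) operator on $l^2(\Gamma)$, since it only permutes coordinates. On basis vectors we have $\psi_\tau(\sigma\triangle k)=1$ if and only if $\sigma=\tau\triangle k$, so $T_k\psi_\tau=\psi_{\tau\triangle k}$. By (\ref{eq-shift-property}),
\[
\mathsf{F}^{-1}\Xi_k\mathsf{F}\psi_\tau=\mathsf{F}^{-1}Z_{\tau\triangle k}=\psi_{\tau\triangle k}.
\]
So the two bounded operators $\widetilde{\Xi}_k$ and $T_k$ agree on the orthonormal basis $\{\psi_\tau\}$. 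By linearity and continuity they coincide, which gives $\widetilde{\Xi}_k=T_k$.

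\textbf{Step 2: summing over $k$.} Conjugation by the unitary $\mathsf{F}$ is norm-continuous, so
\[
\widetilde{A}_w=\sum_{k=0}^\infty w(k)T_k,
\]
with the series converging in operator norm because $\|T_k\|=1$ and $|w|<\infty$. Self-adjointness of $\widetilde{A}_w$ follows from Theorem~\ref{thr-basic-2-1} via unitary equivalence.

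\textbf{Step 3: evaluating pointwise.} Norm convergence implies $\widetilde{A}_wf=\lim_n\sum_{k\le n}w(k)T_kf$ in $l^2(\Gamma)$. Evaluation at a fixed $\sigma$, namely $f\mapsto f(\sigma)=\langle\psi_\sigma,f\rangle$, is a continuous functional. Therefore
\[
\widetilde{A}_wf(\sigma)=\lim_n\sum_{k\le n}w(k)f(\sigma\triangle k).
\]
This numerical series converges absolutely, since $|f(\sigma\triangle k)|\le\|f\|$ and $\sum_k w(k)<\infty$. That yields (\ref{eq-adjacency-represent}).

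The only point needing care is Step 3: moving from operator-norm convergence to the pointwise formula. It is handled by continuity of the coordinate functionals together with the absolute-convergence bound, so I expect no real obstacle.
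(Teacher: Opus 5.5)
Your proposal is correct and follows essentially the same route as the paper: conjugate each $\Xi_k$ by $\mathsf{F}$, identify $\widetilde{\Xi}_k$ as the shift $f\mapsto f(\cdot\triangle k)$ through its action $\psi_\tau\mapsto\psi_{\tau\triangle k}$ on the basis, and then sum the norm-convergent series. Your Step 3 argument, via continuity of the coordinate functional $f\mapsto\langle\psi_\sigma,f\rangle$, makes explicit the pointwise passage to the limit that the paper uses without comment.
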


\begin{equation*}
\begin{diagram}
\node{l^2(\Gamma)}\arrow[2]{e,t}{\mathsf{F}}\arrow{s,l}{\widetilde{A}_w\,=\,\mathsf{F}^{-1}A_w\mathsf{F}} \node[2]{\mathfrak{h}}\arrow{s,r}{A_w}\\
\node{l^2(\Gamma)}\node[2]{\mathfrak{h}}\arrow[2]{w,b}{\mathsf{F}^{-1}}
\end{diagram}
\end{equation*}

\begin{proof}
For $k\in \mathbb{N}$, by setting $\widetilde{\Xi}_k=\mathsf{F}^{-1}\Xi_k\mathsf{F}$, we get a self-adjoint unitary operator
$\widetilde{\Xi}_k$ on $l^2(\Gamma)$. It follows from (\ref{eq-adjacency-operator}) that
\begin{equation}\label{eq-adjacency-operator-2}
  \widetilde{A}_w
  = \mathsf{F}^{-1}A_w\mathsf{F}
  = \sum_{k=0}^{\infty}w(k)\widetilde{\Xi}_k,
\end{equation}
where the series converges in operator norm. Hence $\widetilde{A}_w$ is a self-adjoint bounded operator on $l^2(\Gamma)$.
Now let $f\in l^2(\Gamma)$ and $\sigma\in \Gamma$ be given.
For each $k\in \mathbb{N}$, using the Fourier expansion of $f$ with respect to the orthonormal basis $\{\psi_{\sigma} \mid \sigma\in \Gamma\}$,
we have
\begin{equation*}
  \widetilde{\Xi}_kf
 = \widetilde{\Xi}_k\Big(\sum_{\sigma\in \Gamma}f(\sigma)\psi_{\sigma}\Big)
 = \sum_{\sigma\in \Gamma}f(\sigma)\widetilde{\Xi}_k\psi_{\sigma},
\end{equation*}
which, together with $\widetilde{\Xi}_k\psi_{\sigma}= \psi_{\sigma \triangle k}$, gives
\begin{equation*}
  \widetilde{\Xi}_kf
  = \sum_{\sigma\in \Gamma}f(\sigma)\psi_{\sigma \triangle k}
  = \sum_{\tau\in \Gamma}f(\tau \triangle k)\psi_{\tau}.
\end{equation*}
Thus, for each $k\in \mathbb{N}$, we have $\widetilde{\Xi}_kf(\sigma) = f(\sigma\triangle k)$.
This together with (\ref{eq-adjacency-operator-2}) implies that
\begin{equation*}
  \widetilde{A}_wf(\sigma)
  = \sum_{k=0}^{\infty}w(k)\widetilde{\Xi}_kf(\sigma)
  = \sum_{k=0}^{\infty}w(k)f(\sigma\triangle k),
\end{equation*}
which is the desired.
\end{proof}

\begin{remark}\label{rem-2-1}
Consider the subgraph $(\Gamma_n,\sim)$ of the graph $(\Gamma,\sim)$, where $n\geq 1$. As indicated in Lemma~\ref{lem-2-2},
$(\Gamma_n,\sim)$ is isomorphic to the $n+1$-dimensional hypercube. Let $\widetilde{A}$ be the adjacency operator (matrix) of $(\Gamma_n,\sim)$.
Then, by the spectral theory of graphs \cite{obata} and the structure of $(\Gamma_n,\sim)$, one can show that
$\widetilde{A}$ has a representation of the form
\begin{equation*}
  \widetilde{A}f(\sigma) = \sum_{k=0}^nf(\sigma\triangle k),\quad \sigma\in \Gamma_n,
\end{equation*}
where $f\in l^2(\Gamma_n)$.
Comparing this with (\ref{eq-adjacency-represent}), we come to the observation that
$\widetilde{A}_w=\mathsf{F}^{-1}A_w\mathsf{F}$ is actually an adjacency operator (matrix) of the graph $(\Gamma,\sim)$.
In other words, $A_w$ can be thought of as an adjacency operator (matrix) of the graph $(\Gamma,\sim)$.
This justifies the name of $A_w$ and the next definition.
\end{remark}

\begin{definition}
Let $w$ be a weight on $\mathbb{N}$. The continuous-time quantum walk $A_w$ (the walk $A_w$ below)
is a continuous-time quantum walk that admits the following features:
\begin{itemize}
  \item[(1)]\  The state space of the walk is $\mathfrak{h}$ and its states are represented by unit vectors in $\mathfrak{h}$;
  \item[(2)]\  The evolution of the walk is governed by equation
    \begin{equation}\label{eq-qw-equation}
        \xi_t = \mathrm{e}^{-\mathrm{i}tA_w}\xi_0,\quad t\in \mathbb{R},
    \end{equation}
      where $\xi_t$ denotes its state at time $t$, especially $\xi_0$ is its initial state;
  \item[(3)]\ The probability $P_t(\sigma\!\mid\!\xi_0)$ that the walker is found at vertex $\sigma\in \Gamma$ at time $t\in \mathbb{R}$ is given by
   \begin{equation}\label{eq-3-11}
   P_t(\sigma\!\mid\!\xi_0) = |\langle Z_{\sigma}, \xi_t\rangle|^2.
   \end{equation}
\end{itemize}
In that case, the collection $\{\,\xi_t \mid t\in \mathbb{R}\,\}$ is called the trajectory of the walk with initial state $\xi_0$,
while, for $t\in \mathbb{R}$, the function $\sigma\mapsto P_t(\sigma\!\mid\!\xi_0)$ on $\Gamma$ is referred to as the probability distribution
of the walk at time $t$.
\end{definition}

In light of Remark~\ref{rem-2-1}, the walk $A_w$ can be naturally viewed as a model of continuous-time quantum walk on the graph $(\Gamma,\sim)$.
Physically, it may describe the motion of a free quantum particle on the graph $(\Gamma,\sim)$.

\section{Spectral property}\label{sec-3}

Let $w$ be a weight on $\mathbb{N}$. In this section, we mainly examine the spectral property of the walk $A_w$.
In the following, we use $\mathrm{spec}\,(S)$ to mean the spectrum of an operator $S$.

Let $2^{\mathbb{N}}$ be the power set of $\mathbb{N}$, namely $2^{\mathbb{N}}:= \{\,\sigma \mid \sigma\subset \mathbb{N}\,\}$.
With the weight $w$, we can associate a function $\mu_w$ on $2^{\mathbb{N}}$ in the following manner:
\begin{equation}\label{eq-3-1}
  \mu_w(\sigma) = \sum_{k=0}^{\infty}\mathbf{1}_{\sigma}(k)w(k),\quad \sigma\in 2^{\mathbb{N}}.
\end{equation}
Note that $\mu_w(\sigma)$ still makes sense for $\sigma \in \Gamma$ since $\Gamma\subset 2^{\mathbb{N}}$.

Clearly, the function $\mu_w$ is bounded by the interval $[0,|w|]$ in the sense that
$0\leq \mu_w(\sigma)\leq |w|$, $\sigma\in 2^{\mathbb{N}}$. The weight $w$ is said to be ideal if it satisfies that
\begin{equation}\label{eq}
\overline{\big\{\mu_w(\sigma)\mid \sigma\in 2^{\mathbb{N}}\big\}}=[0,|w|],
\end{equation}
where $\overline{\big\{\mu_w(\sigma)\mid \sigma\in 2^{\mathbb{N}}\big\}}$ means the closure of
$\big\{\mu_w(\sigma)\mid \sigma\in 2^{\mathbb{N}}\big\}$ in $\mathbb{R}$.

\begin{proposition}\label{prop-shift-eigenvector}
Let $\sigma\in \Gamma$ be given. Then, correspondingly, there exists a sequence $(u_n)_{n\geq 1}$ of unit vectors in $\mathfrak{h}$ such that
\begin{equation}\label{eq-shift-eugenvector}
  \Xi_ku_n = \mathcal{E}_{\sigma}(k)u_n,\quad k\in \mathbb{N}_n,\, n\geq 1,
\end{equation}
where $\mathcal{E}_{\sigma}(k)=2\times\mathbf{1}_{\sigma}(k)-1$ and $\mathbb{N}_n=\{0, 1,\cdots,n\}$.
\end{proposition}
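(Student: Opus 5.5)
We plan to build $u_n$ as a normalized signed sum over the finite hypercube $\Gamma_n$, generalizing the vector $\xi_n$ from the proof of Theorem~\ref{thr-basic-2-1}. The signs will be characters of the group $(\Gamma_n,\triangle)$. Fix $\sigma\in\Gamma$ and $n\geq 1$. For $\tau\in\Gamma_n$ we write
\begin{equation*}
  \chi(\tau)=\prod_{j\in \tau\cap\mathbb{N}_n}\mathcal{E}_{\sigma}(j)=\prod_{j\in\tau}\mathcal{E}_{\sigma}(j),
\end{equation*}
with the empty product equal to $1$. We then set
\begin{equation*}
  u_n=2^{-(n+1)/2}\sum_{\tau\in\Gamma_n}\chi(\tau)Z_{\tau}.
\end{equation*}
Since $\#(\Gamma_n)=2^{n+1}$, the $Z_\tau$ are orthonormal, and $|\chi(\tau)|=1$, the vector $u_n$ is a unit vector.

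The key identity is multiplicativity: for $k\in\mathbb{N}_n$ and $\tau\in\Gamma_n$,
\begin{equation*}
  \chi(\tau\triangle k)=\mathcal{E}_{\sigma}(k)\chi(\tau).
\end{equation*}
If $k\notin\tau$, adjoining $k$ multiplies the product by $\mathcal{E}_\sigma(k)$. If $k\in\tau$, removing $k$ divides by $\mathcal{E}_\sigma(k)$, and this equals multiplying by it because $\mathcal{E}_\sigma(k)\in\{1,-1\}$.

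Next we apply (\ref{eq-shift-property}) and reindex with the bijection $\tau\mapsto\tau\triangle k$ of $\Gamma_n$, just as in the proof of Theorem~\ref{thr-basic-2-1}:
\begin{equation*}
  \Xi_k u_n=2^{-(n+1)/2}\sum_{\tau\in\Gamma_n}\chi(\tau)Z_{\tau\triangle k}
  =2^{-(n+1)/2}\sum_{\gamma\in\Gamma_n}\chi(\gamma\triangle k)Z_{\gamma}
  =\mathcal{E}_\sigma(k)u_n .
\end{equation*}
This is exactly (\ref{eq-shift-eugenvector}).

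The only point requiring care is the sign identity for $\chi$, which is elementary. No convergence issues arise, since all sums are finite. The construction of the characters and the verification of $\chi(\tau\triangle k)=\mathcal{E}_{\sigma}(k)\chi(\tau)$ are therefore the whole content of the proof.
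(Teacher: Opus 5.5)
Your proof is correct. Your vector is exactly the paper's, but you verify the eigen-relation by a different key step.

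The paper defines $u_n=2^{-(n+1)/2}\big(\prod_{k=0}^n(I+\mathcal{E}_\sigma(k)\Xi_k)\big)Z_\emptyset$. By an unwritten ``lengthy calculation'' it expands this into your formula $2^{-(n+1)/2}\sum_{\tau\in\Gamma_n}\chi(\tau)Z_\tau$, and reads off $\|u_n\|=1$ just as you do.

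For $\Xi_k u_n=\mathcal{E}_\sigma(k)u_n$, the paper stays at the operator level. It uses $\Xi_k(I+\mathcal{E}_\sigma(k)\Xi_k)=\mathcal{E}_\sigma(k)(I+\mathcal{E}_\sigma(k)\Xi_k)$, i.e.\ $\frac12(I+\mathcal{E}_\sigma(k)\Xi_k)$ is the spectral projection of $\Xi_k$ for the eigenvalue $\mathcal{E}_\sigma(k)$, together with commutativity of the $\Xi_j$. The eigen-relation is then immediate without coordinates, and that part would work for any commuting family of self-adjoint unitaries. The cost is that normalization still needs the expansion of the product in the canonical basis, which is the step the paper omits.

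You work entirely in the canonical basis. The eigen-relation reduces to the character identity $\chi(\tau\triangle k)=\mathcal{E}_\sigma(k)\chi(\tau)$ plus the reindexing $\tau\mapsto\tau\triangle k$ already used for $\xi_n$ in Theorem~\ref{thr-basic-2-1}; that $\xi_n$ is the special case $\chi\equiv 1$. This makes your argument fully explicit and self-contained, with no omitted computation. The trade-off is that it relies on the concrete action $\Xi_kZ_\tau=Z_{\tau\triangle k}$.
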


\begin{proof}
For each $n\geq 1$, define $u_n \in \mathfrak{h}$ as
\begin{equation}\label{eq-unit-vector}
  u_n = \frac{1}{\sqrt{2^{n+1}}}\Big(\prod_{k=0}^n\big(I + \mathcal{E}_{\sigma}(k)\Xi_k\big)\Big)Z_{\emptyset},
\end{equation}
where $Z_{\emptyset}$ is the basis vector of the canonical ONB for $\mathfrak{h}$ that is indexed by $\emptyset$.
Then, we get a sequence $(u_n)_{n\geq 1}$ in $\mathfrak{h}$. Next, let us show that the sequence is the desired.
Let $n\geq 1$ be given. A lengthy calculation gives
\begin{equation*}
  u_n = \frac{1}{\sqrt{2^{n+1}}}\sum_{\tau\in \Gamma_n}\Big(\prod_{k\in \tau}\mathcal{E}_{\sigma}(k)\Big)Z_{\tau},
\end{equation*}
where $\Gamma_n = \{\tau \mid \tau \subset \mathbb{N}_n\}$. Thus
\begin{equation*}
  \|u_n\|^2 = \frac{1}{2^{n+1}}\sum_{\tau\in \Gamma_n}\Big|\prod_{k\in \tau}\mathcal{E}_{\sigma}(k)\Big|^2\|Z_{\tau}\|^2
            = \frac{1}{2^{n+1}}\sum_{\tau\in \Gamma_n}1
            = 1,
\end{equation*}
which shows that $u_n$ is a unit vector. Now let $k\in \mathbb{N}_n$. By the commuting property of
the operator family $\{\Xi_j \mid j\in \mathbb{N}\}$ and the equality
$\Xi_k(I+\mathcal{E}_{\sigma}(k)\Xi_k) = \mathcal{E}_{\sigma}(k)(I+\mathcal{E}_{\sigma}(k)\Xi_k)$, we have
\begin{equation*}
\begin{split}
  \Xi_k\Big(\prod_{j=0}^n\big(I + \mathcal{E}_{\sigma}(j)\Xi_j\big)\Big)
    & = \Xi_k\big(I + \mathcal{E}_{\sigma}(k)\Xi_k\big)\Big(\prod_{j=0, j\ne k}^n\big(I + \mathcal{E}_{\sigma}(j)\Xi_j\big)\Big)\\
    & = \mathcal{E}_{\sigma}(k)(I+\mathcal{E}_{\sigma}(k)\Xi_k)\Big(\prod_{j=0, j\ne k}^n\big(I + \mathcal{E}_{\sigma}(j)\Xi_j\big)\Big)\\
    & = \mathcal{E}_{\sigma}(k)\Big(\prod_{j=0}^n\big(I + \mathcal{E}_{\sigma}(j)\Xi_j\big)\Big),
\end{split}
\end{equation*}
which together with (\ref{eq-unit-vector}) yields
\begin{equation*}
  \Xi_ku_n
  = \frac{1}{\sqrt{2^{n+1}}}\Xi_k\Big(\prod_{k=0}^n\big(I + \mathcal{E}_{\sigma}(k)\Xi_k\big)\Big)Z_{\emptyset}
  = \frac{\mathcal{E}_{\sigma}(k)}{\sqrt{2^{n+1}}}\Big(\prod_{k=0}^n\big(I + \mathcal{E}_{\sigma}(k)\Xi_k\big)\Big)Z_{\emptyset}
  = \mathcal{E}_{\sigma}(k)u_n.
\end{equation*}
This completes the proof.
\end{proof}

\begin{theorem}\label{thr-basic-3-2}
Write $B_w = \frac{1}{2}(|w|I + A_w)$. Then $B_w$ is a positive bounded operator and its spectrum admits an estimate of the following form
\begin{equation}\label{eq-spectrum-estimate}
 \overline{\{\mu_w(\sigma) \mid \sigma\in 2^{\mathbb{N}}\}} \subset  \mathrm{spec}\,(B_w)\subset [0,|w|].
\end{equation}
In particular, if the weight $w$ is ideal, then $\mathrm{spec}\,(B_w) = [0,|w|]$.
\end{theorem}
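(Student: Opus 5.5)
The upper inclusion and positivity come straight from Theorem~\ref{thr-basic-2-1}. Since $A_w$ is self-adjoint with $\|A_w\|=|w|$, we have $\mathrm{spec}\,(A_w)\subset[-|w|,|w|]$. By the spectral mapping theorem for the affine map $x\mapsto\frac{1}{2}(|w|+x)$, we get $\mathrm{spec}\,(B_w)\subset[0,|w|]$. Because $B_w$ is self-adjoint with spectrum in $[0,\infty)$, it is positive. Equivalently, $\langle u,B_wu\rangle=\frac12(|w|\|u\|^2+\langle u,A_wu\rangle)\ge 0$ by Cauchy--Schwarz and $\|A_w\|=|w|$.

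For the lower inclusion, spectra are closed, so it suffices to show $\mu_w(\sigma)\in\mathrm{spec}\,(B_w)$ for every $\sigma\in 2^{\mathbb{N}}$. I would build approximate eigenvectors, adapting Proposition~\ref{prop-shift-eigenvector} to an arbitrary, possibly infinite, $\sigma\subset\mathbb{N}$. The definition (\ref{eq-unit-vector}) of $u_n$ only uses $\mathcal{E}_\sigma(k)$ for $k\le n$, and the proof never uses finiteness of $\sigma$, so it applies verbatim. It gives unit vectors $u_n$ with $\Xi_ku_n=\mathcal{E}_\sigma(k)u_n$ for $k\in\mathbb{N}_n$.

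Then
\[
A_wu_n=\sum_{k=0}^{n}w(k)\mathcal{E}_\sigma(k)u_n+\sum_{k>n}w(k)\Xi_ku_n .
\]
Since each $\Xi_k$ is unitary, the tail has norm at most $\sum_{k>n}w(k)\to0$. Moreover $\sum_{k=0}^n w(k)\mathcal{E}_\sigma(k)\to 2\mu_w(\sigma)-|w|$, because $\frac12(1+\mathcal{E}_\sigma(k))=\mathbf 1_\sigma(k)$. Hence
\[
\|(B_w-\mu_w(\sigma)I)u_n\|\le \tfrac12\Big|\sum_{k=0}^n w(k)\mathcal{E}_\sigma(k)-(2\mu_w(\sigma)-|w|)\Big|+\tfrac12\sum_{k>n}w(k)\longrightarrow 0 .
\]
A sequence of unit vectors with this property forces $\mu_w(\sigma)\in\mathrm{spec}\,(B_w)$: if $B_w-\mu_w(\sigma)I$ had a bounded inverse, then $1=\|u_n\|\le\|(B_w-\mu_w(\sigma)I)^{-1}\|\,\|(B_w-\mu_w(\sigma)I)u_n\|\to0$, a contradiction. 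Taking closures gives (\ref{eq-spectrum-estimate}).

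The ideal case is immediate: the left side of (\ref{eq-spectrum-estimate}) equals $[0,|w|]$, so both inclusions are equalities. The only delicate point in the plan is checking that Proposition~\ref{prop-shift-eigenvector} really extends to infinite $\sigma$. This is needed because $\mu_w$ ranges over $2^{\mathbb{N}}$, not just over $\Gamma$. The other steps are routine. Alternatively, one can restrict to $\sigma\in\Gamma$ and note that $\mu_w(\sigma\cap\mathbb{N}_n)\to\mu_w(\sigma)$, so the values on $\Gamma$ are already dense in $\{\mu_w(\sigma)\mid\sigma\in2^{\mathbb{N}}\}$.
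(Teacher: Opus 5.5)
Your proof is correct and follows the paper's approach: approximate eigenvectors $u_n$ from Proposition~\ref{prop-shift-eigenvector}, then the easy direction of Weyl's criterion, then closedness of the spectrum. The differences are minor. You get positivity from $\|A_w\|\le|w|$ instead of the paper's decomposition $B_w=\frac12\sum_k w(k)(I+\Xi_k)$ with each $\frac12(I+\Xi_k)$ a projection. You also apply the $u_n$ directly to infinite $\sigma\subset\mathbb{N}$, which is legitimate because nothing in that construction or the estimate uses $\#(\sigma)<\infty$, so you can skip the paper's density step from $\Gamma$ to $2^{\mathbb{N}}$.
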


\begin{proof}
Clearly, $B_w$ is bounded. For each $k\in \mathbb{N}$, it can be verified that $\frac{1}{2}(I+\Xi_k)$ is self-adjoint and idempotent,
hence it is a projection operator, which implies that $I+\Xi_k\geq 0$.
On the other hand, by straightforward calculations, we find
\begin{equation}\label{eq-3-6}
  B_w = \frac{1}{2}\sum_{k=0}^{\infty}w(k)(I + \Xi_k),
\end{equation}
which, together with $w(k)>0$ as well as $I+\Xi_k\geq 0$, means that $B_w\geq 0$, namely $B_w$ is positive.
Using Theorem~\ref{thr-basic-2-1}, we get
\begin{equation*}
\|B_w\|= \frac{1}{2}\|(|w|I + A_w)\|\leq \frac{1}{2}(|w| + \|A_w\|) = |w|.
\end{equation*}
Thus, by the spectral theory of operators \cite{reed-simon}, we come to know that $\mathrm{spec}\,(B_w)\subset [0,|w|]$.

Let $\sigma\in \Gamma$ be given. By Proposition~\ref{prop-shift-eigenvector}, there exists a sequence $(u_n)_{n\geq 1}$ of unit vectors
in $\mathfrak{h}$ such that (\ref{eq-shift-eugenvector}) holds.
For each $n\geq 1$, by (\ref{eq-3-6}) and Proposition~\ref{prop-shift-eigenvector}, we have
\begin{equation*}
  B_wu_n = \Big(\sum_{k=0}^n\mathbf{1}_{\sigma}(k)w(k)\Big)u_n + \frac{1}{2}\sum_{k=n+1}^{\infty}w(k)(I+\Xi_k)u_n.
\end{equation*}
On the other hand, for each $n\geq 1$, by (\ref{eq-3-1}) we have
\begin{equation*}
  \mu_w(\sigma)u_n = \Big(\sum_{k=0}^n\mathbf{1}_{\sigma}(k)w(k)\Big)u_n + \Big(\sum_{k=n+1}^{\infty}\mathbf{1}_{\sigma}(k)w(k)\Big)u_n.
\end{equation*}
Thus, for each $n\geq 1$, using the above two equalities we get
\begin{equation*}
\begin{split}
  \|B_wu_n- \mu_w(\sigma)u_n\|
   & = \Big\| \frac{1}{2}\sum_{k=n+1}^{\infty}w(k)(I+\Xi_k)u_n
          - \Big(\sum_{k=n+1}^{\infty}\mathbf{1}_{\sigma}(k)w(k)\Big)u_n\Big\|\\
   & \leq \frac{1}{2}\sum_{k=n+1}^{\infty}w(k)\|I+\Xi_k\|\|u_n\|
          + \Big(\sum_{k=n+1}^{\infty}\mathbf{1}_{\sigma}(k)w(k)\Big)\|u_n\|\\
   & \leq 2 \sum_{k=n+1}^{\infty}w(k).
\end{split}
\end{equation*}
This, together with the property $\sum_{k=0}^{\infty}w(k)<\infty$, implies that $\|B_wu_n- \mu_w(\sigma)u_n\|\rightarrow 0$ as $n\to \infty$.
Thus, by the well-known Weyl's criterion (see Theorem VII.12 of \cite{reed-simon}),
we come to know that $\mu_w(\sigma)\in \mathrm{spec}\,(B_w)$. It then follows from the arbitrariness of $\sigma\in \Gamma$ that
\begin{equation*}
\{\mu_w(\sigma) \mid \sigma\in \Gamma\} \subset  \mathrm{spec}\,(B_w),
\end{equation*}
which, together with the inclusion relation
$\big\{\mu_w(\sigma)\mid \sigma\in 2^{\mathbb{N}}\big\}\subset \overline{\{\mu_w(\sigma) \mid \sigma\in \Gamma\}}$,
yields the desired inclusion relation
$\overline{\big\{\mu_w(\sigma)\mid \sigma\in 2^{\mathbb{N}}\big\}}\subset \mathrm{spec}\,(B_w)$.

Finally, if the weight $w$ is ideal, then $\overline{\{\mu_w(\sigma)\mid \sigma\in 2^{\mathbb{N}}\}}=[0,|w|]$,
which together with (\ref{eq-spectrum-estimate}) implies $\mathrm{spec}\,(B_w) = [0,|w|]$.
\end{proof}

\begin{corollary}\label{coro-3-3}
Let the weight $w$ be ideal. Then it holds true that
\begin{equation}\label{eq}
\mathrm{spec}\,(B_w) = \overline{\big\{\mu_w(\sigma) \mid \sigma\in 2^{\mathbb{N}}\big\}}= \overline{\{\mu_w(\sigma) \mid \sigma\in \Gamma\}}.
\end{equation}
\end{corollary}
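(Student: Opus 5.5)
The corollary asserts two equalities, and I will prove each separately, relying on Theorem~\ref{thr-basic-3-2} and on the elementary inclusions between the sets involved. The first equality, $\mathrm{spec}\,(B_w) = \overline{\{\mu_w(\sigma) \mid \sigma\in 2^{\mathbb{N}}\}}$, is immediate. Since $w$ is ideal, the last assertion of Theorem~\ref{thr-basic-3-2} gives $\mathrm{spec}\,(B_w)=[0,|w|]$. The definition of an ideal weight says that $[0,|w|]$ equals $\overline{\{\mu_w(\sigma)\mid \sigma\in 2^{\mathbb{N}}\}}$. Chaining these two identities settles it.

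For the second equality I will show two inclusions. Since $\Gamma\subset 2^{\mathbb{N}}$, we have $\{\mu_w(\sigma)\mid\sigma\in\Gamma\}\subset\{\mu_w(\sigma)\mid\sigma\in 2^{\mathbb{N}}\}$, and taking closures gives $\overline{\{\mu_w(\sigma)\mid\sigma\in\Gamma\}}\subset \overline{\{\mu_w(\sigma)\mid\sigma\in 2^{\mathbb{N}}\}}$.

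For the reverse inclusion I need $\{\mu_w(\sigma)\mid\sigma\in 2^{\mathbb{N}}\}\subset\overline{\{\mu_w(\sigma)\mid\sigma\in\Gamma\}}$. The proof of Theorem~\ref{thr-basic-3-2} uses this fact without justification, so I will supply it here. Fix $\sigma\in 2^{\mathbb{N}}$ and set $\sigma_n=\sigma\cap\mathbb{N}_n\in\Gamma$. Then
\[
0\le \mu_w(\sigma)-\mu_w(\sigma_n)=\sum_{k=n+1}^{\infty}\mathbf{1}_{\sigma}(k)w(k)\le \sum_{k=n+1}^{\infty}w(k),
\]
which tends to $0$ because $|w|<\infty$. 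Hence $\mu_w(\sigma)$ is a limit of values of $\mu_w$ on $\Gamma$. Taking closures then yields the reverse inclusion, and combining both inclusions gives the second equality.

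There is no real obstacle here. The only step needing care is the approximation of $\mu_w(\sigma)$ for infinite $\sigma$ by its values on finite truncations, and the tail estimate above handles it. Alternatively, one could note that $\{\mu_w(\sigma)\mid\sigma\in\Gamma\}\subset\mathrm{spec}\,(B_w)$, which was proved inside Theorem~\ref{thr-basic-3-2}, and that the spectrum is closed. That route still requires the same density argument to obtain equality rather than mere containment.
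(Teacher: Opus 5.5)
Your proof is correct and follows the same route the paper implicitly relies on. You combine the ideal case of Theorem~\ref{thr-basic-3-2} with the inclusions between the two closures, and you also supply, via the tail estimate for $\sigma\cap\mathbb{N}_n$, the density fact $\{\mu_w(\sigma)\mid\sigma\in 2^{\mathbb{N}}\}\subset\overline{\{\mu_w(\sigma)\mid\sigma\in\Gamma\}}$, which the paper only asserts inside the proof of that theorem.
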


We now turn to examining the spectral property of the walk $A_w$, which is closely related to that of the operator $B_w$ as will be seen below.

\begin{theorem}\label{thr-basic-3-3}
The $w$-adjacency operator $A_w$ has a spectral estimate of the following form
\begin{equation}\label{eq-basic-3-3}
  \overline{\big\{2\mu_w(\sigma)-|w| \mid \sigma\in 2^{\mathbb{N}}\big\}}
  \subset \mathrm{spec}\,(A_w) \subset [-|w|,|w|].
\end{equation}
In particular, if the weight $w$ is ideal, then $\mathrm{spec}\,(A_w) = [-|w|, |w|]$.
\end{theorem}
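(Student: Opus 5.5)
The plan is to transfer Theorem~\ref{thr-basic-3-2} from $B_w$ to $A_w$ via the affine relation $A_w = 2B_w - |w|I$. This identity follows directly from the definition $B_w=\frac{1}{2}(|w|I+A_w)$.

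Since $A_w$ is a bounded operator, the spectral mapping theorem applies with the affine polynomial $p(x)=2x-|w|$. In fact only the elementary fact is needed that $\lambda\in\mathrm{spec}\,(B_w)$ if and only if $2\lambda-|w|\in\mathrm{spec}\,(A_w)$. This holds because
\[
A_w-(2\lambda-|w|)I = 2(B_w-\lambda I),
\]
and the left side is invertible exactly when $B_w-\lambda I$ is. Hence
\[
\mathrm{spec}\,(A_w)=\{2\lambda-|w| \mid \lambda\in \mathrm{spec}\,(B_w)\} = p(\mathrm{spec}\,(B_w)).
\]

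Next apply $p$ to the chain of inclusions (\ref{eq-spectrum-estimate}). The map $p$ is a homeomorphism of $\mathbb{R}$, so it commutes with taking closures. This gives
\[
p\big(\overline{\{\mu_w(\sigma)\mid\sigma\in 2^{\mathbb{N}}\}}\big)=\overline{\{2\mu_w(\sigma)-|w|\mid \sigma\in 2^{\mathbb{N}}\}}.
\]
For the outer bound, $p([0,|w|])=[-|w|,|w|]$. Together these yield (\ref{eq-basic-3-3}). As an alternative check on the upper inclusion, one could argue directly: $A_w$ is self-adjoint with $\|A_w\|=|w|$ by Theorem~\ref{thr-basic-2-1}, so its spectrum lies in $[-|w|,|w|]$.

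If $w$ is ideal, Theorem~\ref{thr-basic-3-2} gives $\mathrm{spec}\,(B_w)=[0,|w|]$. Applying $p$ then gives $\mathrm{spec}\,(A_w)=[-|w|,|w|]$.

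There is no real obstacle here. The only point requiring care is justifying that closure commutes with the affine map, which holds because $p$ is continuous with continuous inverse.
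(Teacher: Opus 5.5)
Your proposal is correct and follows the paper's route. The paper also writes $A_w=f(B_w)$ with $f(t)=2t-|w|$ and applies the spectral mapping theorem to the inclusions of Theorem~\ref{thr-basic-3-2}. Your identity $A_w-(2\lambda-|w|)I=2(B_w-\lambda I)$ and your remark that the affine homeomorphism commutes with closures simply spell out details the paper leaves implicit.
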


\begin{proof}
Consider the function $f(t) = 2t- |w|$. It is easy to see that $A_w = f(B_w)$.
Hence, by the spectral mapping theorem \cite{reed-simon} and Theorem~\ref{thr-basic-3-2} above, we get (\ref{eq-basic-3-3}).
If the weight $w$ is ideal, then again by the spectral mapping theorem and Theorem~\ref{thr-basic-3-2},
we get $\mathrm{spec}\,(A_w) = [-|w|, |w|]$.
\end{proof}

\begin{remark}
Consider the weight $w_0$ on $\mathbb{N}$ given by $w_0(k) = \frac{1}{2^{k+1}}$, $k\in \mathbb{N}$. It is not hard to verify
that $w_0$ is ideal. Thus, the conditions required in Theorem~\ref{thr-basic-3-2}, Corollary~\ref{coro-3-3} and Theorem~\ref{thr-basic-3-3}
can be satisfied.
\end{remark}

For $\sigma\in \Gamma$ with $\sigma\ne \emptyset$, we define $\Xi_{\sigma}=\prod_{k\in \sigma} \Xi_k$, which makes sense since
the family $\{\Xi_k \mid k\in \mathbb{N}\}$ is commuting. By convention, we set $\Xi_{\emptyset}=I$.
Using (\ref{eq-shift-property}), we can show that
\begin{equation}\label{eq}
  Z_{\sigma} = \Xi_{\sigma}Z_{\emptyset},\quad \sigma\in \Gamma.
\end{equation}
Note that $\Xi_{\sigma}$ is self-adjoint for all $\sigma\in \Gamma$.
Keeping these facts in mind, we can prove the next theorem, which provides some information about the spectral structure of $A_w$.

\begin{theorem}\label{thr-basic-3-4}
Both $-|w|$ and $|w|$ belong to $\mathrm{spec}\,(A_w)$, but they are not eigenvalues of $A_w$.
\end{theorem}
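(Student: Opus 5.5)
The proof has two parts: membership in the spectrum, and exclusion of eigenvalues. For membership, Theorem~\ref{thr-basic-3-3} gives $\overline{\{2\mu_w(\sigma)-|w| \mid \sigma\in 2^{\mathbb{N}}\}}\subset \mathrm{spec}\,(A_w)$. Taking $\sigma=\emptyset$ yields $\mu_w(\emptyset)=0$, hence $-|w|\in\mathrm{spec}\,(A_w)$. Taking $\sigma=\mathbb{N}$ yields $\mu_w(\mathbb{N})=|w|$, hence $|w|\in\mathrm{spec}\,(A_w)$.

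For the eigenvalue claim, I would treat $|w|$ first and then reduce $-|w|$ to it. Suppose $A_w\xi=|w|\xi$ for some $\xi\in\mathfrak{h}$. Rewriting this as $\sum_k w(k)(I-\Xi_k)\xi=0$ and taking the inner product with $\xi$ gives
\begin{equation*}
\sum_{k=0}^{\infty}w(k)\langle \xi,(I-\Xi_k)\xi\rangle=0 .
\end{equation*}
Each operator $\frac12(I-\Xi_k)$ is a projection, in the same way as $\frac12(I+\Xi_k)$ in the proof of Theorem~\ref{thr-basic-3-2}, so every term is nonnegative. Since $w(k)>0$, each term vanishes, so $\|(I-\Xi_k)\xi\|^2=2\langle\xi,(I-\Xi_k)\xi\rangle=0$, and therefore $\Xi_k\xi=\xi$ for all $k$.

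Next, expand $\xi=\sum_{\sigma}c_\sigma Z_\sigma$ with $c_\sigma=\langle Z_\sigma,\xi\rangle$. Using (\ref{eq-shift-property}) and the self-adjointness of $\Xi_k$,
\begin{equation*}
c_{\sigma\triangle k}=\langle \Xi_k Z_\sigma,\xi\rangle=\langle Z_\sigma,\Xi_k\xi\rangle=c_\sigma .
\end{equation*}
Every $\sigma\in\Gamma$ is reached from $\emptyset$ by finitely many such flips; equivalently, $Z_\sigma=\Xi_\sigma Z_\emptyset$ gives $c_\sigma=\langle Z_\emptyset,\Xi_\sigma\xi\rangle=c_\emptyset$ directly. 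So all coefficients equal $c_\emptyset$. Because $\Gamma$ is infinite, $\sum_\sigma|c_\sigma|^2<\infty$ forces $c_\emptyset=0$, hence $\xi=0$.

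For $-|w|$, the same argument applies with $I+\Xi_k$ in place of $I-\Xi_k$. It gives $\Xi_k\xi=-\xi$ for all $k$, hence $c_{\sigma\triangle k}=-c_\sigma$. Then $|c_\sigma|=|c_\emptyset|$ for every $\sigma$, and square-summability again gives $\xi=0$.

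The step needing the most care is extracting $\Xi_k\xi=\pm\xi$ from the single eigen-equation. This works because the terms are positive: $w(k)>0$ for every $k$ and $I\mp\Xi_k\ge 0$. I would state explicitly that $\frac12(I-\Xi_k)$ is a projection, which follows since $\Xi_k$ is a self-adjoint unitary. The rest is bookkeeping on coefficients.
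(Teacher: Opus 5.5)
Your proof is correct. The membership part and the argument that $|w|$ is not an eigenvalue match the paper's. The paper also writes $\langle\xi,|w|\xi-A_w\xi\rangle=\frac12\sum_k w(k)\|(I-\Xi_k)\xi\|^2$ via $(I-\Xi_k)^2=2(I-\Xi_k)$, deduces $\Xi_k\xi=\xi$ for all $k$, and uses $Z_\sigma=\Xi_\sigma Z_\emptyset$ with Parseval to get $\|\xi\|^2=\sum_{\sigma\in\Gamma}|\langle Z_\emptyset,\xi\rangle|^2$.

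The difference is at $-|w|$. You rerun the positivity argument with $I+\Xi_k$. The paper, which only sketches this step, instead uses the self-adjoint unitary $T$ of Section~\ref{sec-4}, with $TZ_\sigma=(-1)^{\#(\sigma)}Z_\sigma$, together with Proposition~\ref{prop-shift-time-commutation}. From $T\Xi_k=-\Xi_kT$ one gets $TA_w=-A_wT$, so $T$ sends any eigenvector for $-|w|$ to an eigenvector for $|w|$, which must be $0$.

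Your version is self-contained and avoids the forward reference. Note that your relation $\Xi_k\xi=-\xi$ gives $c_\sigma=(-1)^{\#(\sigma)}c_\emptyset$, which is exactly the parity pattern that $T$ encodes. The paper's route instead exposes the structural reason: $TA_wT=-A_w$, so $A_w$ is unitarily equivalent to $-A_w$ and its spectrum and point spectrum are symmetric about $0$. This is the same mechanism the paper later uses for time-reversal symmetry.
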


\begin{proof}
It is easy to see that $-|w| = 2\mu_w(\emptyset)-|w|$ and $|w| = 2\mu_w(\mathbb{N})-|w|$, hence by (\ref{eq-basic-3-3})
we know that $-|w|$ and $|w|$ belongs to $\mathrm{spec}\,(A_w)$. Next, we show that $|w|$ is not an eigenvalue of $A_w$.

Let $\xi\in \mathfrak{h}$ be such that $A_w\xi = |w|\xi$. Then, using $(I-\Xi_k)^2= 2(I-\Xi_k)$, $k\in \mathbb{N}$,
we have
\begin{equation*}
  \begin{split}
   \frac{1}{2}\sum_{k=0}^{\infty}w(k)\|(I-\Xi_k)\xi\|^2
    & = \frac{1}{2}\sum_{k=0}^{\infty}w(k)\langle \xi, (I-\Xi_k)^2\xi\rangle\\
    & = \sum_{k=0}^{\infty}w(k)\langle \xi, (I-\Xi_k)\xi\rangle\\
    & = \langle\xi, |w|\xi - A_w\xi\rangle\\
    & =0,
  \end{split}
\end{equation*}
which, together with $w(k)>0$, $k\in \mathbb{N}$, implies that $\Xi_k\xi=\xi$, $\forall\, k\in \mathbb{N}$.
Thus, $\Xi_{\sigma}\xi =\xi$, $\forall\, \sigma\in \Gamma$, which together with Parseval formula gives
\begin{equation*}
  \|\xi\|^2
   = \sum_{\sigma\in \Gamma}|\langle Z_{\sigma},\xi\rangle|^2
   = \sum_{\sigma\in \Gamma}|\langle \Xi_{\sigma}Z_{\emptyset},\xi\rangle|^2
   = \sum_{\sigma\in \Gamma}|\langle Z_{\emptyset},\Xi_{\sigma}\xi\rangle|^2
   = \sum_{\sigma\in \Gamma}|\langle Z_{\emptyset},\xi\rangle|^2,
\end{equation*}
which implies that $\xi=0$. Thus, $|w|$ is not an eigenvalue of $A_w$.

By using the operator $T$ introduced in Section~\ref{sec-4}
and Proposition~\ref{prop-shift-time-commutation} therein, we can show that $-|w|$ is not an eigenvalue of $A_w$ either.
\end{proof}

\section{Time-reversal symmetry}\label{sec-4}

Let $w$ be a fixed weight on $\mathbb{N}$. In the present section, we investigate the time-reversal symmetry of the walk $A_w$
and properties of its probability distributions.

\begin{theorem}\label{thr-time-reversal-operator}
There exists a unitary operator $T$ on $\mathfrak{h}$ such that $T^*=T$ and
\begin{equation}\label{eq-time-reversal-operator}
  TZ_{\sigma} = (-1)^{\#(\sigma)}Z_{\sigma},\quad \sigma\in \Gamma.
\end{equation}
\end{theorem}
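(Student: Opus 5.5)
We will construct $T$ by specifying its action on the canonical ONB $\{Z_{\sigma} \mid \sigma\in \Gamma\}$ of $\mathfrak{h}$ and extending by linearity and continuity. Concretely, for $\xi\in \mathfrak{h}$ with Fourier expansion $\xi=\sum_{\sigma\in\Gamma}\langle Z_{\sigma},\xi\rangle Z_{\sigma}$, we set
\begin{equation*}
  T\xi = \sum_{\sigma\in \Gamma}(-1)^{\#(\sigma)}\langle Z_{\sigma},\xi\rangle Z_{\sigma}.
\end{equation*}
Since $|(-1)^{\#(\sigma)}|=1$, this series converges in $\mathfrak{h}$ by the Parseval formula. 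The formula defines a linear operator with $\|T\xi\|=\|\xi\|$, so $T$ is an isometry. It clearly satisfies (\ref{eq-time-reversal-operator}) on each basis vector. Equivalently, $T=\mathsf{F}M\mathsf{F}^{-1}$, where $M$ is the multiplication operator on $l^2(\Gamma)$ by the $\{\pm1\}$-valued function $\sigma\mapsto(-1)^{\#(\sigma)}$.

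Next we will verify self-adjointness. For $\xi,\eta\in\mathfrak{h}$, expanding both in the ONB gives
\begin{equation*}
  \langle \eta, T\xi\rangle=\sum_{\sigma\in\Gamma}(-1)^{\#(\sigma)}\overline{\langle Z_{\sigma},\eta\rangle}\langle Z_{\sigma},\xi\rangle .
\end{equation*}
This expression equals $\langle T\eta,\xi\rangle$ because the coefficients $(-1)^{\#(\sigma)}$ are real. Hence $T^*=T$.

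Finally, we will show that $T$ is unitary. Applying the definition twice gives $T^2Z_{\sigma}=(-1)^{2\#(\sigma)}Z_{\sigma}=Z_{\sigma}$, so $T^2=I$ on the ONB and, by boundedness, on all of $\mathfrak{h}$. Together with $T^*=T$ this yields $T^*T=TT^*=I$.

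No step here is a genuine obstacle. The only point requiring care is the well-definedness of $T$ on all of $\mathfrak{h}$, and that follows from the Parseval formula as above, or directly from the unitary isomorphism $\mathsf{F}$ and the boundedness of $M$.
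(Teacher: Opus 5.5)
Your proposal is correct and follows essentially the same route as the paper. Both define $T\xi=\sum_{\sigma\in\Gamma}(-1)^{\#(\sigma)}\langle Z_{\sigma},\xi\rangle Z_{\sigma}$ using Parseval, check $T^*=T$ from the realness of the coefficients, and deduce unitarity from $T^2=I$. The only differences are cosmetic: you verify $T^2=I$ on the basis and extend by continuity where the paper computes $T(T\xi)$ directly, and you add the harmless reformulation $T=\mathsf{F}M\mathsf{F}^{-1}$.
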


\begin{proof}
For any $\xi\in \mathfrak{h}$, since the family $\{Z_{\sigma}\mid \sigma\in \Gamma\}$ is the canonical ONB for $\mathfrak{h}$,
we find
\begin{equation*}
  \sum_{\sigma\in \Gamma}\big|(-1)^{\#(\sigma)}\langle Z_{\sigma},\xi\rangle\big|^2
= \sum_{\sigma\in \Gamma}\big|\langle Z_{\sigma},\xi\rangle\big|^2
= \|\xi\|^2 <\infty,
\end{equation*}
which implies that the vector series $\sum_{\sigma\in \Gamma}(-1)^{\#(\sigma)}\langle Z_{\sigma},\xi\rangle Z_{\sigma}$
converges in the norm of $\mathfrak{h}$. In view of this, we define a mapping $T\colon \mathfrak{h}\rightarrow \mathfrak{h}$ as follows:
\begin{equation*}
T\xi = \sum_{\sigma\in \Gamma}(-1)^{\#(\sigma)}\langle Z_{\sigma},\xi\rangle Z_{\sigma},\quad \xi\in \mathfrak{h}.
\end{equation*}
Clearly, $T$ is a bounded operator on $\mathfrak{h}$ and satisfies (\ref{eq-time-reversal-operator}).
For any $\xi$, $\eta\in \mathfrak{h}$, straightforward calculations give
\begin{equation*}
  \langle T\xi, \eta\rangle
  = \sum_{\sigma\in \Gamma}\overline{(-1)^{\#(\sigma)}\langle Z_{\sigma},\xi\rangle} \langle Z_{\sigma},\eta\rangle
  = \sum_{\sigma\in \Gamma}\overline{\langle Z_{\sigma},\xi\rangle} (-1)^{\#(\sigma)}\langle Z_{\sigma},\eta\rangle
  = \langle \xi, T\eta\rangle,
\end{equation*}
which means that $T$ is symmetric, hence $T^*=T$. Finally, for any $\xi\in \mathfrak{h}$, using (\ref{eq-time-reversal-operator}) we have
\begin{equation*}
  T^2\xi
   = T(T\xi)
   = \sum_{\sigma\in \Gamma}(-1)^{\#(\sigma)}\langle Z_{\sigma},T\xi\rangle Z_{\sigma}
   = \sum_{\sigma\in \Gamma}(-1)^{\#(\sigma)}\langle TZ_{\sigma},\xi\rangle Z_{\sigma}
   = \sum_{\sigma\in \Gamma}\langle Z_{\sigma},\xi\rangle Z_{\sigma}
   =\xi.
\end{equation*}
Thus $T^2=I$, which together with $T^*=T$ implies that $T$ is a unitary operator.
\end{proof}

By writing $2\mathbb{N}:= \{2k \mid k\in \mathbb{N}\}$ and $2\mathbb{N}+1:= \{2k+1 \mid k\in \mathbb{N}\}$,
we introduce two closed linear subspaces of $\mathfrak{h}$ as follows
\begin{equation}\label{eq-4-2}
  \mathfrak{h}^{(+)}= \overline{\mathrm{span}}\{\,Z_{\sigma} \mid \sigma\in \Gamma,\, \#(\sigma)\in 2\mathbb{N}\,\},\quad
  \mathfrak{h}^{(-)}= \overline{\mathrm{span}}\{\,Z_{\sigma} \mid \sigma\in \Gamma,\, \#(\sigma)\in 2\mathbb{N}+1\,\}.
\end{equation}
Here $\overline{\mathrm{span}}\, D$ signifies the closure of the linear subspace spanned by a vector set $D$ of $\mathfrak{h}$.
Clearly, $\mathfrak{h}^{(+)}$ and $\mathfrak{h}^{(-)}$ give rise to an orthogonal decomposition of $\mathfrak{h}$:
\begin{equation}
 \mathfrak{h}= \mathfrak{h}^{(+)} + \mathfrak{h}^{(-)},\quad \mathfrak{h}^{(+)}\perp \mathfrak{h}^{(-)}.
\end{equation}

\begin{proposition}\label{prop-4-2}
It holds that\ \ $\mathfrak{h}^{(+)}=\ker(I-T)$\ \ and\ \ $\mathfrak{h}^{(-)}=\ker(I+T)$.
\end{proposition}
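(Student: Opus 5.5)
We prove the two equalities by double inclusion, relying only on the defining formula (\ref{eq-time-reversal-operator}) and the fact from Theorem~\ref{thr-time-reversal-operator} that $T$ is bounded.

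\emph{The inclusions $\mathfrak{h}^{(+)}\subset\ker(I-T)$ and $\mathfrak{h}^{(-)}\subset\ker(I+T)$.} If $\#(\sigma)\in 2\mathbb{N}$, then (\ref{eq-time-reversal-operator}) gives $TZ_{\sigma}=Z_{\sigma}$, so $Z_{\sigma}\in\ker(I-T)$. Since $\ker(I-T)$ is a linear subspace, it contains the linear span of these vectors. Since $T$ is bounded, $\ker(I-T)$ is also closed, so it contains the closed span, which is $\mathfrak{h}^{(+)}$. The same argument, using $TZ_{\sigma}=-Z_{\sigma}$ when $\#(\sigma)\in 2\mathbb{N}+1$, gives $\mathfrak{h}^{(-)}\subset\ker(I+T)$.

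\emph{The reverse inclusions.} Take $\xi\in\ker(I-T)$ and expand it in the canonical ONB. The coefficients satisfy
\[
\langle Z_{\sigma},\xi\rangle=\langle Z_{\sigma},T\xi\rangle=\langle TZ_{\sigma},\xi\rangle=(-1)^{\#(\sigma)}\langle Z_{\sigma},\xi\rangle,
\]
where the middle step uses $T^*=T$. Hence $\langle Z_{\sigma},\xi\rangle=0$ whenever $\#(\sigma)$ is odd. The Fourier expansion $\xi=\sum_{\sigma}\langle Z_{\sigma},\xi\rangle Z_{\sigma}$ then involves only basis vectors with $\#(\sigma)$ even. This series converges in norm, so $\xi\in\mathfrak{h}^{(+)}$. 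For $\xi\in\ker(I+T)$, the same computation gives $\langle Z_{\sigma},\xi\rangle=-(-1)^{\#(\sigma)}\langle Z_{\sigma},\xi\rangle$, which forces the coefficients with $\#(\sigma)$ even to vanish, so $\xi\in\mathfrak{h}^{(-)}$.

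\emph{Alternative route.} Once the first inclusions are known, the reverse ones also follow from orthogonality. The subspaces $\ker(I-T)$ and $\ker(I+T)$ are eigenspaces of the self-adjoint operator $T$ for the distinct eigenvalues $1$ and $-1$, so they are mutually orthogonal. Combined with the decomposition $\mathfrak{h}=\mathfrak{h}^{(+)}\oplus\mathfrak{h}^{(-)}$, this gives $\ker(I-T)\subset(\ker(I+T))^{\perp}\subset(\mathfrak{h}^{(-)})^{\perp}=\mathfrak{h}^{(+)}$, and symmetrically for the other eigenspace.

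None of these steps is a real obstacle. The one point that needs care is passing from the statement on basis vectors to the closed span, which uses closedness of the kernel of a bounded operator.
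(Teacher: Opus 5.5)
Your proof is correct and takes essentially the same route as the paper. The reverse inclusion uses the same coefficient computation $\langle Z_{\sigma},\xi\rangle=\langle TZ_{\sigma},\xi\rangle=(-1)^{\#(\sigma)}\langle Z_{\sigma},\xi\rangle$ as the paper, and for the forward inclusion you use closedness of $\ker(I-T)$ where the paper applies $T$ termwise to the Fourier expansion, which is the same continuity argument. Your alternative orthogonality route is also valid but is not needed.
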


\begin{proof}
Suppose that $\xi\in \mathfrak{h}^{(+)}$. Then, by (\ref{eq-4-2}) and Theorem~\ref{thr-time-reversal-operator}, we have
\begin{equation*}
  T\xi = \sum_{\sigma\in \Gamma, \#(\sigma)\in 2\mathbb{N}}\langle Z_{\sigma},\xi\rangle TZ_{\sigma}
       = \sum_{\sigma\in \Gamma, \#(\sigma)\in 2\mathbb{N}}\langle Z_{\sigma},\xi\rangle Z_{\sigma}
       = \xi,
\end{equation*}
which implies that $\xi\in \ker(I-T)$. Now assume that $\xi\in \ker(I-T)$. Then, for $\sigma\in \Gamma$ with $\#(\sigma)\in 2\mathbb{N}+1$,
we have
\begin{equation*}
\langle Z_{\sigma}, \xi\rangle = \langle Z_{\sigma}, T\xi\rangle = \langle T Z_{\sigma}, \xi\rangle
= \langle (-1)^{\#(\sigma)}Z_{\sigma}, \xi\rangle = -\langle Z_{\sigma}, \xi\rangle,
\end{equation*}
which implies that $\langle Z_{\sigma}, \xi\rangle =0$. Thus, using the Fourier expansion of $\xi$ and the above fact, we get
\begin{equation*}
  \xi = \sum_{\sigma\in \Gamma}\langle Z_{\sigma}, \xi\rangle Z_{\sigma}
      = \sum_{\sigma\in \Gamma,\#(\sigma)\in 2\mathbb{N}}\langle Z_{\sigma}, \xi\rangle Z_{\sigma},
\end{equation*}
which means that $\xi \in \mathfrak{h}^{(+)}$. Therefore $\mathfrak{h}^{(+)}=\ker(I-T)$. Similarly,
$\mathfrak{h}^{(-)}=\ker(I+T)$ can also be verified.
\end{proof}

\begin{proposition}\label{prop-shift-time-commutation}
For each $k\in \mathbb{N}$, it holds true that\ \ $T\Xi_k = -\Xi_kT$.
\end{proposition}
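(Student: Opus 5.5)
The plan is to verify the anticommutation relation on the canonical ONB $\{Z_{\sigma}\mid \sigma\in\Gamma\}$ and then extend it to all of $\mathfrak{h}$ by linearity and continuity. Both $T\Xi_k$ and $-\Xi_kT$ are bounded operators, since $\Xi_k$ is unitary and $T$ is unitary by Theorem~\ref{thr-time-reversal-operator}. So it suffices to show $T\Xi_kZ_{\sigma}=-\Xi_kTZ_{\sigma}$ for every $\sigma\in\Gamma$. Two bounded operators that agree on an orthonormal basis agree on its closed linear span, which is $\mathfrak{h}$.

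Fix $k\in\mathbb{N}$ and $\sigma\in\Gamma$. By the shift property (\ref{eq-shift-property}) and (\ref{eq-time-reversal-operator}),
\begin{equation*}
  T\Xi_kZ_{\sigma} = TZ_{\sigma\triangle k} = (-1)^{\#(\sigma\triangle k)}Z_{\sigma\triangle k},
  \qquad
  \Xi_kTZ_{\sigma} = (-1)^{\#(\sigma)}\Xi_kZ_{\sigma} = (-1)^{\#(\sigma)}Z_{\sigma\triangle k}.
\end{equation*}
The only combinatorial input needed is $\#(\sigma\triangle k)=\#(\sigma)\pm1$. If $k\in\sigma$ then $\sigma\triangle k=\sigma\setminus k$ has cardinality $\#(\sigma)-1$; if $k\notin\sigma$ then $\sigma\triangle k=\sigma\cup k$ has cardinality $\#(\sigma)+1$. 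In either case $(-1)^{\#(\sigma\triangle k)}=-(-1)^{\#(\sigma)}$. Hence $T\Xi_kZ_{\sigma}=-\Xi_kTZ_{\sigma}$.

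For the extension, take $\xi\in\mathfrak{h}$ with Fourier expansion $\xi=\sum_{\sigma}\langle Z_{\sigma},\xi\rangle Z_{\sigma}$. Apply both bounded operators termwise, which is justified by norm convergence and continuity, to get $T\Xi_k\xi=-\Xi_kT\xi$. There is no real obstacle here. The only points needing care are the parity flip, which is handled by the case split on $\mathbf{1}_{\sigma}(k)$, and the density/continuity argument.
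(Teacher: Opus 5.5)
Your proposal is correct and follows essentially the same route as the paper. Both check $T\Xi_kZ_{\sigma}=-\Xi_kTZ_{\sigma}$ on the canonical ONB via the parity flip $(-1)^{\#(\sigma\triangle k)}=-(-1)^{\#(\sigma)}$ and then extend to all of $\mathfrak{h}$; you simply spell out the case split on $\mathbf{1}_{\sigma}(k)$ and the boundedness/density extension that the paper leaves implicit.
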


\begin{proof}
Let $k\in \mathbb{N}$ be given. Then, for each $\sigma\in \Gamma$, by (\ref{eq-shift-property}) and (\ref{eq-time-reversal-operator}) we have
\begin{equation*}
  T\Xi_kZ_{\sigma} = TZ_{\sigma \triangle k}=(-1)^{\#(\sigma \triangle k)}Z_{\sigma \triangle k}
= (-1)^{\#(\sigma \triangle k)}\Xi_kZ_{\sigma},
\end{equation*}
which, together with $(-1)^{\#(\sigma \triangle k)} = - (-1)^{\#(\sigma )}$ and (\ref{eq-time-reversal-operator}), gives
\begin{equation*}
  T\Xi_kZ_{\sigma} = - (-1)^{\#(\sigma)}\Xi_kZ_{\sigma} = - \Xi_kTZ_{\sigma}.
\end{equation*}
Since the family $\{Z_{\sigma}\mid \sigma\in \Gamma\}$ is the canonical ONB for $\mathfrak{h}$, we finally get $T\Xi_k = - \Xi_kT$.
\end{proof}

In the classical theory of time-reversal symmetry, the time-reversal symmetry of a quantum system is described by an anti-unitary operator.
The next theorem, however, shows that the time-reversal symmetry of the walk $A_w$ can be described directly by a unitary operator.

\begin{theorem}\label{thr-time-reversal-symmetry}
The unitary operator $T$ is a time-reversal operator of the walk $A_w$, namely it holds that
\begin{equation}\label{eq}
  Te^{-\mathrm{i}tA_w} = e^{\mathrm{i}tA_w}T,\quad \forall\, t\in \mathbb{R}.
\end{equation}
\end{theorem}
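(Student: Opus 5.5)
The plan is to first establish the anticommutation $TA_w = -A_wT$ and then lift it to the exponential through its power series. By Proposition~\ref{prop-shift-time-commutation}, $T\Xi_k = -\Xi_kT$ for every $k\in\mathbb{N}$. Multiplying by $w(k)$ and summing over $0\le k\le n$ gives
\begin{equation*}
T\Big(\sum_{k=0}^{n}w(k)\Xi_k\Big) = -\Big(\sum_{k=0}^{n}w(k)\Xi_k\Big)T .
\end{equation*}
The partial sums converge to $A_w$ in operator norm, and left or right multiplication by the bounded operator $T$ is norm-continuous. Letting $n\to\infty$ therefore yields $TA_w = -A_wT$.

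Next, induction on $m$ gives $TA_w^m = (-1)^mA_w^mT = (-A_w)^mT$ for all $m\in\mathbb{N}$. The inductive step is
\begin{equation*}
TA_w^{m+1} = (TA_w^m)A_w = (-1)^mA_w^m(TA_w) = (-1)^{m+1}A_w^{m+1}T .
\end{equation*}

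Since $A_w$ is bounded (Theorem~\ref{thr-basic-2-1}), for each $t\in\mathbb{R}$ the exponential has the norm-convergent expansion
\begin{equation*}
e^{-\mathrm{i}tA_w}=\sum_{m=0}^{\infty}\frac{(-\mathrm{i}t)^m}{m!}A_w^m .
\end{equation*}
Applying $T$ termwise (again justified by continuity of multiplication by $T$) and using the identity from the induction gives
\begin{equation*}
Te^{-\mathrm{i}tA_w}=\sum_{m=0}^{\infty}\frac{(-\mathrm{i}t)^m}{m!}(-A_w)^mT=\sum_{m=0}^{\infty}\frac{(\mathrm{i}t)^m}{m!}A_w^mT=e^{\mathrm{i}tA_w}T,
\end{equation*}
which is the assertion.

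There is no serious obstacle. The only points needing care are the two limit interchanges: passing from the finite sums to $A_w$, and applying $T$ termwise to the exponential series. Both are handled by norm convergence of the respective series together with boundedness of $T$.
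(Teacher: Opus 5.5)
Your proposal is correct and follows the paper's proof: derive $TA_w=-A_wT$ from Proposition~\ref{prop-shift-time-commutation}, extend it to $TA_w^n=(-1)^nA_w^nT$, and apply $T$ termwise to the norm-convergent exponential series. The only difference is that you write out the norm-limit step for the partial sums and the induction, which the paper leaves implicit.
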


\begin{proof}
It follows from Proposition~\ref{prop-shift-time-commutation} that $TA_w = - A_wT$.
Thus, for a general $n\in \mathbb{N}$, we further have
\begin{equation}\label{eq}
  TA_w^n = (-1)^nA_w^nT.
\end{equation}
Now let $t\in \mathbb{R}$ be given. Then, using the boundedness of $A_w$ and the unitary property of $T$, we get
\begin{equation*}
  Te^{-\mathrm{i}tA_w}
     = \sum_{n=0}^{\infty}\frac{(-\mathrm{i}t)^n}{n!}TA_w^n
     = \sum_{n=0}^{\infty}\frac{(-\mathrm{i}t)^n}{n!}(-1)^nA_w^nT
     = e^{\mathrm{i}tA_w}T.
\end{equation*}
This is the desired.
\end{proof}

Recall that the probability distribution $P_t(\cdot \mid \xi_0)$ of the walk $A_w$ at time $t\in \mathbb{R}$ is given by
\begin{equation}\label{eq}
  P_t(\sigma\! \mid\! \xi_0) = |\langle Z_{\sigma}, e^{-\mathrm{i}tA_w}\xi_0\rangle|^2,\quad \sigma\in \Gamma,
\end{equation}
where $\xi_0$ is the initial state, which, by definition, is a unit vector in $\mathfrak{h}$.
The following proposition just shows that the probability distributions of the walk $A_w$ have some kind of time-reversal symmetry.

\begin{proposition}\label{prop-prob-dist-basic}
Let $\xi_0\in \mathfrak{h}$ be a unit vector in $\mathfrak{h}$. Then, for all $\sigma\in \Gamma$, one has
\begin{equation}\label{eq-prob-dist-basic}
P_t(\sigma\!\mid\! T\xi_0) = P_{-t}(\sigma\!\mid\! \xi_0), \quad t\in \mathbb{R}.
\end{equation}
\end{proposition}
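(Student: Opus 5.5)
The plan is to compute $P_t(\sigma\mid T\xi_0)$ directly from the definition and move $T$ through the evolution operator using Theorem~\ref{thr-time-reversal-symmetry}. First I note that $T\xi_0$ is again a unit vector, since $T$ is unitary by Theorem~\ref{thr-time-reversal-operator}. So $P_t(\cdot\mid T\xi_0)$ is well defined, and by definition
\begin{equation*}
P_t(\sigma\mid T\xi_0)=\big|\langle Z_{\sigma}, e^{-\mathrm{i}tA_w}T\xi_0\rangle\big|^2 .
\end{equation*}

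The key step is to rewrite $e^{-\mathrm{i}tA_w}T$. Applying Theorem~\ref{thr-time-reversal-symmetry} with $t$ replaced by $-t$ gives $Te^{\mathrm{i}tA_w}=e^{-\mathrm{i}tA_w}T$. Hence
\begin{equation*}
\langle Z_{\sigma}, e^{-\mathrm{i}tA_w}T\xi_0\rangle=\langle Z_{\sigma}, Te^{\mathrm{i}tA_w}\xi_0\rangle .
\end{equation*}

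Next I move $T$ onto $Z_\sigma$. Since $T^*=T$ and $TZ_\sigma=(-1)^{\#(\sigma)}Z_\sigma$ by (\ref{eq-time-reversal-operator}), and $(-1)^{\#(\sigma)}$ is real,
\begin{equation*}
\langle Z_{\sigma}, Te^{\mathrm{i}tA_w}\xi_0\rangle=\langle TZ_{\sigma}, e^{\mathrm{i}tA_w}\xi_0\rangle=(-1)^{\#(\sigma)}\langle Z_{\sigma}, e^{-\mathrm{i}(-t)A_w}\xi_0\rangle .
\end{equation*}

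Taking squared moduli removes the sign $(-1)^{\#(\sigma)}$, and the result is exactly $P_{-t}(\sigma\mid\xi_0)$. There is no real obstacle here. The only points needing care are using Theorem~\ref{thr-time-reversal-symmetry} at time $-t$ so that $T$ lands on the correct side, and checking that the modulus kills the sign factor.
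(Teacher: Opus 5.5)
Your proof is correct and follows the paper's argument essentially verbatim: apply Theorem~\ref{thr-time-reversal-symmetry} at time $-t$ to get $e^{-\mathrm{i}tA_w}T = Te^{\mathrm{i}tA_w}$, move the self-adjoint $T$ onto $Z_\sigma$, and let the squared modulus absorb the sign $(-1)^{\#(\sigma)}$. Your remark that $T\xi_0$ is again a unit vector, so that $P_t(\cdot\mid T\xi_0)$ is well defined, is a small point the paper leaves implicit.
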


\begin{proof}
Let $\sigma\in \Gamma$ and $t\in \mathbb{R}$ be given. By Theorem~\ref{thr-time-reversal-symmetry}
and the self-adjoint property of $T$, we have
\begin{equation*}
  P_t(\sigma\!\mid\! T\xi_0)
  = |\langle Z_{\sigma}, e^{-\mathrm{i}tA_w}T\xi_0\rangle|^2
  = |\langle Z_{\sigma}, Te^{\mathrm{i}tA_w}\xi_0\rangle|^2
  = |\langle TZ_{\sigma}, e^{\mathrm{i}tA_w}\xi_0\rangle|^2,
\end{equation*}
which together with $TZ_{\sigma} = (-1)^{\#(\sigma)}Z_{\sigma}$ gives
\begin{equation*}
  P_t(\sigma\!\mid\! T\xi_0) = \big|\big\langle (-1)^{\#(\sigma)}Z_{\sigma}, e^{\mathrm{i}tA_w}\xi_0\big\rangle\big|^2
  = |\langle Z_{\sigma}, e^{\mathrm{i}tA_w}\xi_0\rangle|^2
  = P_{-t}(\sigma\!\mid\! \xi_0).
\end{equation*}
Therefore (\ref{eq-prob-dist-basic}) is true.
\end{proof}

The next theorem further reveals that the probability distributions of the walk $A_w$ have the usual symmetry with respect to time $t$
provided its initial state $\xi_0$ meets some mild requirements.

\begin{theorem}\label{thr-distribution-symmetry-usual}
Let the initial state $\xi_0$ of the walk $A_w$ be such that $\xi_0\in \mathfrak{h}^{(+)}$ or $\xi_0\in \mathfrak{h}^{(-)}$.
Then, for all $\sigma\in \Gamma$, it holds that
\begin{equation}\label{eq}
P_t(\sigma\!\mid \!\xi_0) = P_{-t}(\sigma\!\mid \!\xi_0), \quad t\in \mathbb{R}.
\end{equation}
\end{theorem}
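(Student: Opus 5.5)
The plan is to reduce the statement to Proposition~\ref{prop-prob-dist-basic} by exploiting the fact that $\xi_0$ is an eigenvector of $T$. By Proposition~\ref{prop-4-2}, $\mathfrak{h}^{(+)}=\ker(I-T)$ and $\mathfrak{h}^{(-)}=\ker(I+T)$. Hence, if $\xi_0\in \mathfrak{h}^{(+)}$, we have $T\xi_0=\xi_0$. If $\xi_0\in\mathfrak{h}^{(-)}$, we have $T\xi_0=-\xi_0$. In either case, $T\xi_0=\epsilon\xi_0$ with $\epsilon\in\{1,-1\}$. In particular, $T\xi_0$ is again a unit vector, so $P_t(\sigma\!\mid\! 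T\xi_0)$ makes sense as a probability distribution of the walk.

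Next we observe that $P_t(\sigma\!\mid\!\cdot)$ is insensitive to a global phase (here a sign). Since $e^{-\mathrm{i}tA_w}$ is linear,
\begin{equation*}
P_t(\sigma\!\mid\! \epsilon\xi_0)=|\langle Z_\sigma, e^{-\mathrm{i}tA_w}\epsilon\xi_0\rangle|^2=|\epsilon|^2\,|\langle Z_\sigma, e^{-\mathrm{i}tA_w}\xi_0\rangle|^2=P_t(\sigma\!\mid\!\xi_0).
\end{equation*}
Combining this with Proposition~\ref{prop-prob-dist-basic} gives
\begin{equation*}
P_{-t}(\sigma\!\mid\!\xi_0)=P_t(\sigma\!\mid\! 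T\xi_0)=P_t(\sigma\!\mid\!\epsilon\xi_0)=P_t(\sigma\!\mid\!\xi_0)
\end{equation*}
for all $\sigma\in\Gamma$ and $t\in\mathbb{R}$, which is the claim.

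There is no real obstacle here. The only point requiring care is identifying $\xi_0$ as a $\pm1$ eigenvector of $T$, and that is exactly the content of Proposition~\ref{prop-4-2}. Everything else is the invariance of $|\cdot|^2$ under multiplication by $\pm1$.
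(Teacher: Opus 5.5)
Your proposal is correct and follows essentially the same route as the paper: use Proposition~\ref{prop-4-2} to see that $\xi_0$ is a $\pm1$ eigenvector of $T$, apply Proposition~\ref{prop-prob-dist-basic}, and note that $P_t(\sigma\!\mid\!\cdot)$ is unchanged when the initial state is multiplied by $-1$. The only difference is cosmetic: you handle both cases at once with $\epsilon\in\{1,-1\}$, whereas the paper treats $\mathfrak{h}^{(-)}$ separately by writing $\xi_0=T(-\xi_0)$.
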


\begin{proof}
Let $\sigma\in \Gamma$ and $t\in \mathbb{R}$ be given. If $\xi_0\in \mathfrak{h}^{(+)}$, then, by Proposition~\ref{prop-4-2},
$\xi_0 = T\xi_0$, which together with Proposition~\ref{prop-prob-dist-basic} yields
\begin{equation*}
  P_t(\sigma\!\mid \!\xi_0) = P_t(\sigma\!\mid \!T\xi_0)= P_{-t}(\sigma\!\mid \!\xi_0).
\end{equation*}
In the case of $\xi_0\in \mathfrak{h}^{(-)}$, we can similarly get
$P_t(\sigma\!\mid \!\xi_0)= P_{-t}(\sigma\!\mid \!-\xi_0)$, which, together with the equality
$P_{-t}(\sigma\!\mid \!-\xi_0)=P_{-t}(\sigma\!\mid \!\xi_0)$, implies that $P_t(\sigma\!\mid \!\xi_0)= P_{-t}(\sigma\!\mid \!\xi_0)$.
\end{proof}

\end{document}